\def\diag{{\rm diag}}
\def\T{{\rm T}}
\def\SU{{\rm SU}}
\def\U{{\rm U}}
\def\su{\mathfrak{su}}
\def\C{{\mathbb C}}
\def\R{{\mathbb R}}
\def\SO{{\rm SO}}
\def\OO{{\rm O}}
\def\Spin{{\rm Spin}}
\def\tr{{\rm tr}}
\def\Re{{\rm Re}}
\def\Im{{\rm Im}}
\def\rank{{\rm rank}}
\def\Zero{{\rm O}}
\newtheorem{theorem}{Theorem}
\def\@email#1#2{%
 \endgroup
 \patchcmd{\titleblock@produce}
  {\frontmatter@RRAPformat}
  {\frontmatter@RRAPformat{\produce@RRAP{*#1\href{mailto:#2}{#2}}}\frontmatter@RRAPformat}
  {}{}
}%
\begin{document}

\preprint{AIP/123-QED}

\title[On constant solutions of ${\rm SU}(2)$ Yang--Mills--Dirac equations]{On constant solutions of  ${\rm SU}(2)$ Yang--Mills--Dirac equations}
\author{D. Shirokov}
 \email{dshirokov@hse.ru}
\affiliation{ 
HSE University, Moscow, Russia
}%

\affiliation{%
Institute for Information Transmission Problems of Russian Academy of Sciences, Moscow, Russia
}%


\begin{abstract}
For the first time, a complete classification of all constant solutions of the Yang--Mills--Dirac equations with SU(2) gauge symmetry in Minkowski space $\R^{1,3}$ is given. The explicit form of all solutions is presented. We use the method of hyperbolic singular value decomposition of real and complex matrices and the two-sheeted covering of the group SO(3) by the group SU(2). In the degenerate case of zero potential, we use the pseudo-unitary symmetry of the Dirac equation. Nonconstant solutions can be considered in the form of series of perturbation theory using constant solutions as a zeroth approximation; the equations for the first approximation in the expansion are written.
\end{abstract}

\maketitle

\section{Introduction}
\label{sect1}

At present, the laws of particle physics are described by gauge theories. The Yang--Mills theory with the gauge  non-Abelian Lie group $\U(1)\times\SU(2)$ describes electroweak interactions, with the gauge non-Abelian Lie group $\SU(3)$ it describes strong interactions. Exact solutions of the classical Yang--Mills equations are important for the development of gauge theory, in particular, for describing the vacuum structure of the theory and a more complete understanding of the gauge theory. Also the possible way is to quantize the gauge theory around a classical solution of Yang--Mills equations. For other reasons to consider the classical Yang--Mills theory and its solutions, see the review \cite{MalecRev}. Note that secondary quantization issues are not considered in this paper. For the quantum meaning of classical field theory, see the paper \cite{Jack}.

The complexity of studying the Yang--Mills equations is explained by the nonlinearity of these equations, they are cubic with respect to the potential $A$. In the mode of a small coupling constant, these equations are solved approximately in the form of perturbation theory series. There are many works on particular classes of solutions of the system of the Yang--Mills equations with zero current (note the classical papers\cite{deA, ADHM, Bel, tH, Pol, Wit, WYa} and review \cite{Actor} and others). Constant solutions of the Yang--Mills equations and their significance for describing the physical vacuum are discussed in \cite{C1, C2, C3, C4, Sch, Sch2}. We discuss some particular classes of solutions of the Yang--Mills equations and Yang--Mills--Proca equations in \cite{YMP, YMSh, waves, hforms, CC2022, Shirokov1, Shirokov2}.

The Dirac equation in an external field is linear with respect to the spinor $\Psi$. There are well-known solutions of this equation, for example, the plane wave solutions. The Dirac equation with respect to the variables ($\Psi$, $A$) is nonlinear because it contains the products of the spinor $\Psi$ and the components of the potential $A$. The right-hand side of the Yang--Mills--Dirac equations contains the current, which is also nonlinear (quadratic) with respect to the spinor $\Psi$. The system of the Yang--Mills--Dirac equations is a highly nontrivial nonlinear system of partial differential equations. This system is too complex to obtain exact solutions. Some particular solutions of this system are known; note the papers \cite{AW, Ant, Ant2, Basler, Magg, Meetz, RTV, Malec} and the review \cite{MalecRev}. 

In this paper, we present a compete classification and explicit form of all constant solutions of the Yang--Mills--Dirac equations with $\SU(2)$ gauge symmetry in the Minkowski space $\R^{1,3}$. Note that the constant solutions of the Yang--Mills--Dirac equations are essentially nonlinear solutions and, from this point of view, are especially important for applications. Theorems \ref{thDYM2}, \ref{thR}, and \ref{lemmazerocurrent} are new. Nonconstant solutions can be considered in the form of series of perturbation theory using all constant solutions as a zeroth approximation. For the first approximation, the problem reduces to solving  the systems of linear partial differential equations with constant coefficients, which can be studied using different methods of the theory of linear partial differential equations and numerical analysis. For the second and subsequent approximations, the problem reduces to solving the systems of linear partial differential equations with variable coefficients. These facts can be used in the future for local classification of all (nonconstant) solutions of the classical Yang--Mills--Dirac equations.   

In this paper, we use the method of hyperbolic singular value decomposition (HSVD) of real and complex matrices. The method of ordinary singular value decomposition (SVD \cite{SVD1, SVD2}) is rather standard and is widely used in different applications. The polar decomposition is related to the SVD and is used in the Yang--Mills context in \cite{Sim}. The method of hyperbolic singular value decomposition was proposed in 1991 \cite{Bojan, Bojan2} and is used in signal and image processing, engineering, and computer science. The new version of the HSVD without the use of hyperexchange matrices is presented in the paper \cite{Shirokovhsvd}. This new version of HSVD naturally includes the ordinary SVD. We use the HSVD in this paper two times (for the real matrix $A\in\R^{4\times 3}$ of potential of the Yang--Mills field and for the complex matrix $\Psi\in\C^{4\times 2}$) to obtain all constant solutions of the Yang--Mills--Dirac equations. 

In Section \ref{sectYMD}, we discuss the Yang--Mills--Dirac equations. In Section \ref{sectCS}, we present classification and explicit form of all constant solutions of the Yang--Mills--Dirac equations with $\SU(2)$ gauge symmetry in Minkowski space $\R^{1,3}$. We present explicit formulas for the spinor $\Psi$, the potential $A$, the current $J$, and the invariant $F^2$, which is used in the Lagrangian of the Yang--Mills field. In Section \ref{sectNS}, we discuss the possibility of studying nonconstant solutions of the Yang--Mills--Dirac equations in the form of series of perturbation theory using constant solutions as a zeroth approximation; the equations for the first approximation in the expansion are written. The conclusions follow in Section \ref{sectConcl}. In Appendices \ref{secA} and \ref{secB}, we present detailed proofs of Theorems \ref{thDYM2} and \ref{lemmazerocurrent} from Section \ref{sectCS}.


\section{The Yang--Mills--Dirac equations}
\label{sectYMD}

Let us consider the Minkowski space $\R^{1,3}$ with Cartesian coordinates $x^\mu$, $\mu=0, 1, 2, 3$. We denote partial derivatives by $\partial_\mu=\frac{\partial}{\partial x^\mu}$. The metric tensor of $\R^{1,3}$ is given by the diagonal matrix
\begin{eqnarray}
\eta=(\eta_{\mu\nu})=(\eta^{\mu\nu})=\diag(1, -1, -1, -1).\label{eta}
\end{eqnarray}
We can raise and lower indices of components of tensor fields with the aid of the metric tensor. For example, $F^{\mu\nu}=\eta^{\mu\alpha} \eta^{\nu\beta}F_{\alpha \beta}$.

Let us consider the Lie group
\begin{eqnarray}
\SU(2)=\{S\in\C^{2\times 2}:\quad  S^\dagger S =I_2,\quad \det (S)=1\}
\end{eqnarray}
and the corresponding Lie algebra
\begin{eqnarray}
\su(2)=\{S\in\C^{2\times 2}:\quad S^\dagger=-S,\quad \tr (S)=0\}.
\end{eqnarray}
Consider the $\SU(2)$ Yang--Mills--Dirac equations
\begin{eqnarray}
&&\partial_\mu A_\nu-\partial_\nu A_\mu-[A_\mu, A_\nu]=:F_{\mu\nu},\label{YM1}\\
&&\partial_\mu F^{\mu\nu} -[A_\mu, F^{\mu\nu}]=J^\nu:=i \Psi^\dagger \gamma^0 \gamma^\nu \Psi-\frac{1}{2}\tr(i \Psi^\dagger \gamma^0 \gamma^\nu\Psi)I_2,\label{YM2}\\
&&i\gamma^\mu (\partial_\mu \Psi+\Psi A_\mu)-m \Psi=0,\qquad m\geq 0,\label{YM3}
\end{eqnarray}
for the unknowns $\Psi: \R^{1,3}\to \C^{4\times 2}$ and $A^\mu: \R^{1,3} \to \su(2)$. The covector field with components $A^\mu$ is the potential of the Yang--Mills field, and the tensor field with components $F_{\mu\nu}=-F_{\nu\mu}: \R^{1,3} \to \su(2)$ is the strength of the Yang--Mills field; $m$ is the mass. Note that (\ref{YM1}) can be considered as a definition of the strength $F_{\mu\nu}$. We can substitute (\ref{YM1}) into (\ref{YM2}) and obtain
\begin{eqnarray}
\partial_\mu(\partial^\mu A^\nu-\partial^\nu A^\mu- [A^\mu,A^\nu])- [A_\mu,\partial^\mu A^\nu-\partial^\nu A^\mu-[A^\mu,A^\nu]]=J^\nu.\label{YM4}
\end{eqnarray}
From the system (\ref{YM1}), (\ref{YM2}), (\ref{YM3}), it follows that the current $J^\nu: \R^{1, 3}\to \su(2)$ satisfies the following non-Abelian conservation law
\begin{eqnarray}
\partial _\nu J^\nu-[A_\nu, J^\nu]=0.
\end{eqnarray}
We use the following basis of the Lie algebra $\su(2)$:
\begin{eqnarray}
\tau^1=\frac{1}{2i}\sigma^3,\qquad \tau^2=\frac{1}{2i}\sigma^1,\qquad \tau^3=\frac{1}{2i}\sigma^2,
\end{eqnarray}
where $\sigma^\mu$ are the Pauli matrices
\begin{eqnarray}
\sigma^1=\left( \begin{array}{cc} 0 & 1\\ 1 & 0 \end{array}\right),\quad
\sigma^2=\left( \begin{array}{cc} 0 & -i\\i & 0 \end{array}\right),\quad
\sigma^3=\left( \begin{array}{cc} 1 & 0\\ 0 & -1 \end{array}\right).
\end{eqnarray}
We use the standard Dirac gamma-matrices in (\ref{YM2}) and (\ref{YM3}):
\begin{eqnarray}\gamma^0=\left(
             \begin{array}{cc}
               I_2 & 0 \\
               0 & -I_2 \\
             \end{array}
           \right),\qquad
\gamma^\mu=\left(
               \begin{array}{cc}
                 0 & \sigma^\mu \\
                 -\sigma^\mu & 0 \\
               \end{array}
             \right),\qquad \mu=1, 2, 3.\label{Dgm}
\end{eqnarray}
The system (\ref{YM1}), (\ref{YM2}), (\ref{YM3}) is gauge invariant with respect to the following transformation
\begin{eqnarray}
A^\mu \to  S^{-1}A^\mu S-S^{-1}\partial_\mu S,\qquad F_{\mu\nu}\to S^{-1}F_{\mu\nu}S,\qquad
\Psi \to \Psi S,\qquad J^\mu \to S^{-1} J^\mu S,\qquad S:\R^{1,3}\to\SU(2).\label{gauge}
\end{eqnarray}
The potential of the Yang--Mills field and the current can be represented in the form
\begin{eqnarray}
A^\mu=A^\mu_{\,\,a}\tau^a,\qquad J^\mu=J^\mu_{\,\,a}\tau^a,\qquad A^\mu_{\,\,a}, J^\mu_{\,\,a}:\R^{1,3}\to \R.
\end{eqnarray}
We denote the corresponding matrices of dimension $4\times 3$ by $A:=(A^\mu_{\,\,a})$ and $J:=(J^\mu_{\,\, a})$ and call them the matrix of potential and the matrix of current.

\section{Classification and explicit form of all constant solutions of the Yang--Mills--Dirac equations}
\label{sectCS}

Let us consider the system for constant solutions (the matrices $\Psi\in\C^{4\times 2}$, $A\in\R^{4\times 3}$ do not depend on $x\in\R^{1,3}$) of the system (\ref{YM1}), (\ref{YM2}), (\ref{YM3}):
\begin{eqnarray}
&&[A_\mu, [A^\mu, A^\nu] ]=J^\nu:=i \Psi^\dagger \gamma^0 \gamma^\nu \Psi-\frac{1}{2}\tr(i \Psi^\dagger \gamma^0 \gamma^\nu\Psi)I_2,\label{fir}\\
&&i \gamma^\mu \Psi A_\mu -m\Psi=0,\qquad m\geq 0.\label{sec}
\end{eqnarray}
For the strength, we get $F^{\mu\nu}=-[A^\mu, A^\nu]$.

In this paper, we present explicit form of all solutions $(\Psi,  A)$ of the system of equations (\ref{fir}), (\ref{sec}). Also we give the explicit formulas for the corresponding current $J=(J^\mu_{\,\,a})$ and the invariant $F^2=F_{\mu\nu}F^{\mu\nu}$.


The system of equations (\ref{fir}), (\ref{sec}) is invariant under the transformation
\begin{eqnarray}
A^\mu \to  S^{-1}A^\mu S,\qquad \Psi \to \Psi S,\qquad J^\mu \to S^{-1} J^\mu S,\qquad S\in\SU(2),\label{gauge2}
\end{eqnarray}
because of the gauge invariance (\ref{gauge}). The element $S\in\SU(2)$ does not depend on $x$ now. Using the two-sheeted cover of the orthogonal group $\SO(3)$ by the spin group $\Spin(3)\cong\SU(2)$
\begin{eqnarray}
S^{-1}\tau^a S=p^a_b \tau^b,\qquad P=(p^a_b)\in\SO(3),\qquad \pm S\in\SU(2),\label{cover}
\end{eqnarray}
we conclude that the system (\ref{fir}), (\ref{sec}) is invariant under the transformation
\begin{eqnarray}
A \to  AP,\qquad \Psi \to \Psi S,\qquad J\to JP,\qquad S\in\SU(2),\qquad P\in\SO(3),\label{tr11}
\end{eqnarray}
where $P$ and $S$ are related as (\ref{cover}).

The systems of equations (\ref{YM1}), (\ref{YM2}), (\ref{YM3}) and (\ref{fir}), (\ref{sec}) are invariant under the Lorentz transformation of coordinates. Namely, let us consider the transformation $x^\mu \to q^\mu_\nu x^\mu$, $Q=(q^\mu_\nu)\in\OO(1,3)$. The system (\ref{fir}), (\ref{sec}) is invariant under the transformation
\begin{eqnarray}
A \to QA,\qquad J \to QJ,\qquad \Psi \to T \Psi,\qquad Q\in\OO(1,3),\qquad T\in{\rm Pin}(1,3)\,\label{lorentz}
\end{eqnarray}
where $T$ and $Q$ are related as the two-sheeted cover
\begin{eqnarray}
T^{-1}\gamma^\mu T=q^\mu_\nu \gamma^\nu,\qquad Q=(q^\mu_\nu)\in\OO(1,3),\qquad \pm T\in{\rm Pin}(1,3),\label{cover2}
\end{eqnarray}
and $\gamma^\mu$, $\mu=0, 1, 2, 3$, are the Dirac gamma-matrices (\ref{Dgm}).

Combining transformations (\ref{tr11}) and (\ref{lorentz}), we conclude that the system (\ref{fir}), (\ref{sec}) is invariant under the transformation
\begin{eqnarray}
\Psi \to T\Psi S,\qquad A \to Q A P,\qquad J\to QJP,\qquad
S\in\SU(2),\qquad Q\in\OO(1,3),\qquad P\in\SO(3),\qquad T\in{\rm Pin}(1,3),\label{trans2}
\end{eqnarray}
where $P$ and $S$ are related as (\ref{cover}) and $Q$ and $T$ are related as (\ref{cover2}).

The degenerate cases of zero potential $A=0$ or zero $\Psi=0$ are considered below. First, let us consider the nondegenerate case $(\Psi, A)$ with $\Psi\neq 0$ and $A\neq 0$, which are more interesting from the physical point of view.

Suppose we have the solution $(\Psi, A)$ of the system (\ref{fir}), (\ref{sec}). We can always choose the matrices $P\in\SO(3)$ and $Q\in\OO(1,3)$ such that $QAP$ is in the canonical form (\ref{DD2}). We use the following theorem on the hyperbolic singular value decomposition (see \cite{Shirokovhsvd}).

\begin{theorem}\label{thHSVD} For an arbitrary matrix $A\in\R^{n\times N}$, there exist
matrices $R\in\OO(N)$ and $L\in\OO(p,q)$, $p+q=n$,
such that
\begin{eqnarray}
L^\T A R=\Sigma^A,\qquad
\Sigma^A= \left(
\begin{array}{cccc}
      I_d &      \Zero & \Zero & \Zero \\
   \Zero &     X_{x} & \Zero & \Zero \\
    \Zero &    \Zero & \Zero & \Zero \\
        \hline
   I_d &      \Zero & \Zero & \Zero \\
  \Zero &      \Zero & Y_y & \Zero \\
  \Zero &      \Zero & \Zero & \Zero\\
      \end{array}
    \right)\begin{array}{c}
      \left.
      \begin{array}{c}
          \\
         \\
          \\
      \end{array}
    \right\}{p} \\
      \left.
      \begin{array}{c}
          \\
         \\
          \\
      \end{array}
    \right\}{q}
    \end{array}\in\R^{n\times N},\label{DD2}
\end{eqnarray}
where the first block of the matrix $\Sigma^A$ has $p$ rows and the second block has $q$ rows, $X_{x}$ and $Y_y$ are diagonal matrices of corresponding dimensions $x$ and $y$ with all positive uniquely determined diagonal elements (up to a permutation), $I_d$ is the identity matrix of dimension $d$. We denote zero blocks by $\Zero$.

Moreover, choosing $R$, one can swap columns of the matrix $\Sigma^A$. Choosing $L$, one can swap rows in individual blocks but not across blocks. Thus we can always arrange diagonal elements of the matrices $X_{x}$ and $Y_y$ in decreasing order.

Here we have
\begin{eqnarray}
d=\rank (A)-\rank (A^\T \eta A),\qquad x+y=\rank (A^\T \eta A),
\end{eqnarray}
$x$ is the number of positive eigenvalues of the matrix $A^\T \eta A$, $y$ is the number of negative eigenvalues of the matrix $A^\T \eta A$.
\end{theorem}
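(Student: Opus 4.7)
The plan is to reduce the hyperbolic SVD to two ingredients: the spectral theorem for the real symmetric matrix $M:=A^\T\eta A\in\R^{N\times N}$, and a pseudo-orthonormal normal form in the target space $\R^{p,q}$. First, I would diagonalize $M$: since $M$ is real symmetric, there exists $R\in\OO(N)$ with columns formed by orthonormal eigenvectors such that
\begin{equation}
R^\T M R=\diag(\lambda_1,\ldots,\lambda_x,-\mu_1,\ldots,-\mu_y,0,\ldots,0),
\end{equation}
with $\lambda_i,\mu_j>0$, where $x$ (resp.\ $y$) counts positive (resp.\ negative) eigenvalues of $M$, so automatically $x+y=\rank(M)$. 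Setting $B:=AR$, the columns $b_1,\ldots,b_N$ satisfy $b_i^\T\eta b_j=\delta_{ij}$ times the corresponding diagonal entry. Thus the first $x$ columns (after a permutation built into $R$) are mutually $\eta$-orthogonal timelike vectors of squared norm $\lambda_i$, the next $y$ are spacelike of squared norm $-\mu_j$, and among the remaining columns exactly $d=\rank(A)-\rank(M)$ are nonzero, $\eta$-null, mutually $\eta$-orthogonal and $\eta$-orthogonal to the previous columns; the count follows because $A$ restricted to $\ker M$ has kernel $\ker A$, so its image has dimension $\rank(A)-\rank(M)$.

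Second, I would construct $L\in\OO(p,q)$ that carries this column configuration onto that of $\Sigma^A$. Normalize the timelike and spacelike columns, obtaining pairwise $\eta$-orthogonal unit vectors; Sylvester's law of inertia applied to the pullback $A^\T\eta A$ gives $x\leq p$ and $y\leq q$, so they extend by pseudo-Euclidean Gram--Schmidt to a pseudo-orthonormal basis of $\R^n$. The $d$ null vectors then lie in the $\eta$-orthogonal complement of the span of the timelike/spacelike block, a subspace of signature $(p-x,q-y)$; inside it they form a totally isotropic configuration of dimension $d\leq\min(p-x,q-y)$, which Witt's extension theorem moves by an element of $\OO(p-x,q-y)$ onto the standard hyperbolic basis pairing a timelike and a spacelike standard basis vector. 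Assembling these ingredients and permuting the columns of $R$ to place the null, positive, negative, and zero blocks in the order required by the canonical form yields $L^\T AR=\Sigma^A$. Uniqueness of the diagonal entries of $X_x$ and $Y_y$ up to permutation follows because their squares are exactly the nonzero eigenvalues of $M$, which are intrinsic invariants of $A$.

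The main obstacle is the null block. The positive and negative parts are handled easily inside non-degenerate Euclidean subspaces of signatures $(x,0)$ and $(0,y)$, but the $d$ mutually $\eta$-orthogonal isotropic vectors cannot be placed in either the timelike or the spacelike half alone: their canonical form necessarily pairs a timelike standard basis vector with a spacelike one, which is precisely the origin of the two identity blocks $I_d$ at the top and bottom of $\Sigma^A$. The structural inputs that close the argument are the classification of maximal isotropic subspaces of $\R^{p',q'}$ (of dimension $\min(p',q')$) and Witt's extension theorem, which together ensure that all $\eta$-orthogonal configurations with a prescribed Gram matrix lie in a single $\OO(p,q)$-orbit. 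The residual freedom of swapping columns of $\Sigma^A$ or rearranging within a single block then corresponds to postcomposing $R$ by a permutation matrix or $L$ by an element of $\OO(p)\times\OO(q)$, matching the final clause of the statement.
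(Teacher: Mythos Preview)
The paper does not prove Theorem~\ref{thHSVD} at all: it is quoted from the author's earlier work \cite{Shirokovhsvd} and used as a black box, so there is no ``paper's own proof'' to compare against. Your sketch is a correct proof of the statement. The two structural ingredients you identify---the spectral decomposition of the symmetric matrix $A^\T\eta A$ to produce $R\in\OO(N)$, and Witt's extension theorem in $\R^{p,q}$ to produce $L\in\OO(p,q)$---are exactly what is needed, and your bookkeeping for the counts $x,y,d$ is accurate.

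Two small points worth tightening. First, when you ``normalize the timelike and spacelike columns,'' make explicit that the normalisation factors $\sqrt{\lambda_i}$ and $\sqrt{\mu_j}$ are not discarded but become the diagonal entries of $X_x$ and $Y_y$; otherwise a reader may wonder where those blocks come from. Second, within the zero eigenspace of $M=A^\T\eta A$ you silently use an extra orthogonal change of basis (adapted to $\ker A\subset\ker M$) so that exactly $d$ of the corresponding columns of $AR$ are nonzero and the rest vanish; this should be stated, since a generic orthonormal basis of $\ker M$ will not have that property. With those clarifications the argument is complete; the final clause about swapping rows and columns is, as you say, just postcomposition by permutation matrices in $\OO(N)$ and by elements of $\OO(p)\times\OO(q)\subset\OO(p,q)$.
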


Let us use Theorem  \ref{thHSVD} in the case $p=1$, $q=3$, $n=p+q=4$, $N=3$.
Note that the matrix $\Sigma^A$ has 4 rows and 3 columns in this case. This means that some blocks of the matrix $\Sigma^A$ have size zero.

Note that we can always find the matrix $R=P\in\SO(3)$ from the special orthogonal group in the HSVD. If it has the determinant $-1$, then we can change the sign of all elements of the matrices $P$ and $Q$.

The canonical form of the matrix $A\neq 0$ is characterized by the three parameters $d_A$, $x_A$, and $y_A$. The values of these parameters in all the nine possible cases are presented in Table \ref{tab1}.

\begin{table}[ht]
\caption{\label{tab1}The nondegenerate solutions with $\Psi\neq 0$, $A\neq 0$}
\begin{ruledtabular}
\begin{tabular}{ccccccc}
case & $d_A$ & $x_A$ & $y_A$ & $d_J$ & $x_J$ & $y_J$  \\  
\hline
1 &1 & 0 & 0 & 0 & 0 & 0\\ 
2 &1 & 0 & 1 & - & - & - \\ 
3 &1 & 0 & 2 & - & - & - \\ 
4 &0 & 1 & 0 & - & - & - \\ 
5 &0 & 1 & 1 & - & - & - \\ 
6 &0 & 1 & 2 & 0 & 1 & 0 \\ 
7 &0 & 0 & 1 & - & - & - \\ 
8 &0 & 0 & 2 & 0 & 0 & 2 \\ 
9 &0 & 0 & 3 & - & - & -\\ 
\end{tabular}
\end{ruledtabular}
\end{table}

The matrix $A\neq 0$ has the following explicit form respectively in these nine cases:
\begin{eqnarray}
           \left(
                \begin{array}{ccc}
                  1 & 0 & 0\\ \hline
                  1 & 0 & 0\\
                  0 & 0 & 0\\
                  0 & 0 & 0\\
                \end{array}
              \right),\quad
\left(
                \begin{array}{ccc}
                  1 & 0 & 0\\ \hline
                  1 & 0 & 0\\
                  0 & a_1 & 0\\
                  0 & 0 & 0\\
                \end{array}
              \right),\quad
              \left(
                \begin{array}{ccc}
                  1 & 0 & 0\\ \hline
                  1 & 0 & 0\\
                  0 & a_1 & 0\\
                  0 & 0 & a_2\\
                \end{array}
              \right),\quad
              \left(
                \begin{array}{ccc}
                  a_1 & 0 & 0\\ \hline
                  0 & 0 & 0\\
                  0 & 0 & 0\\
                  0 & 0 & 0\\
                \end{array}
              \right),\quad
              \left(
                \begin{array}{ccc}
                  a_1 & 0 & 0\\ \hline
                  0 & a_2 & 0\\
                  0 & 0 & 0\\
                  0 & 0 & 0\\
                \end{array}
              \right),\label{cases}\\
              \left(
                \begin{array}{ccc}
                  a_1 & 0 & 0\\ \hline
                  0 & a_2 & 0\\
                  0 & 0 & a_3\\
                  0 & 0 & 0\\
                \end{array}
              \right),\quad
            \left(
                \begin{array}{ccc}
                  0 & 0 & 0\\ \hline
                  a_1 & 0 & 0\\
                  0 & 0 & 0\\
                  0 & 0 & 0\\
                \end{array}
              \right),\quad
              \left(
                \begin{array}{ccc}
                  0 & 0 & 0\\ \hline
                  a_1 & 0 & 0\\
                  0 & a_2 & 0\\
                  0 & 0 & 0\\
                \end{array}
              \right),\quad
              \left(
                \begin{array}{ccc}
                  0 & 0 & 0\\ \hline
                  a_1 & 0 & 0\\
                  0 & a_2 & 0\\
                  0 & 0 & a_3\\
                \end{array}
              \right)\,
              \in\R^{4\times 3}.\nonumber
\end{eqnarray}
Note that we can always choose the matrices $P$ and $Q$ such that the hyperbolic singular values are positive $a_1, a_2, a_3 >0$.

Using canonical form of the matrix $A$, we calculate the current $J$ in (\ref{fir}). The parameters $d_J$, $x_J$, and $y_J$ are also presented in Table \ref{tab1}. Then using the equations (\ref{fir}) and (\ref{sec}), we find the explicit form of $\Psi$. Only the Cases 1, 6, and 8 from Table \ref{tab1} are realized (see the proof of Theorem \ref{thDYM2} presented in Appendix \ref{secA} for details). We use the following notation for the blocks $\Psi_1\in\C^{2\times 2}$ and $\Psi_2\in\C^{2\times 2}$ and elements of the matrix $\Psi\in\C^{4\times 2}$:
\begin{eqnarray}\Psi=\left(
         \begin{array}{c}
           \Psi_1 \\ \hline
           \Psi_2 \\
         \end{array}
       \right)=\left(
  \begin{array}{cc}
    G & B \\
    C & D \\ \hline
    K & L \\
    M & N \\
  \end{array}
\right).\label{notat}
\end{eqnarray}
Finally, we get Theorem \ref{thDYM2}.

\begin{theorem}\label{thDYM2} All solutions $(\Psi, A)$  with $\Psi\neq 0$, $A\neq 0$ of the system of equations (\ref{fir}), (\ref{sec}) can be reduced using transformations (\ref{trans2}) to the following three types of solutions:
\begin{enumerate}
\item the solutions with nonzero mass and nonzero current
\begin{eqnarray}
\Psi=\left(
                 \begin{array}{cc}
                   0 & 0 \\
                   0 & 0 \\ \hline
                   0 & L \\
                   0 & 0 \\
                 \end{array}
               \right),\quad
                A=(A^\mu_{\,\, a})=\left(
                    \begin{array}{ccc}
                      2m & 0 & 0 \\ \hline
                      0 & 2m & 0 \\
                      0 & 0 & 2m \\
                      0 & 0 & 0 \\
                    \end{array}
                  \right),\quad
J=(J^\mu_{\,\, a})=\left(
                                       \begin{array}{ccc}
                                         16m^3 & 0 & 0 \\ \hline
                                         0 & 0 & 0 \\
                                         0 & 0 & 0 \\
                                         0 & 0 & 0 \\
                                       \end{array}
                                     \right),\quad F^2=8m^4 I_2,\quad m>0,\label{casei}
\end{eqnarray}
where $L\in\C$ satisfies $|L|=4m^{\frac{3}{2}}$.
\item the solutions with zero mass and nonzero current
\begin{eqnarray}
m=0,\quad \Psi=\left(
  \begin{array}{cc}
    G & iG \\
    iD & D \\ \hline
    K & iK \\
    iN & N \\
  \end{array}
\right),\quad  A=\left(
                \begin{array}{ccc}
                  0 & 0 & 0\\ \hline
                  a_1 & 0 & 0\\
                  0 & a_1 & 0\\
                  0 & 0 & 0\\
                \end{array}
              \right),\quad
J=\left(
                \begin{array}{ccc}
                  0 & 0 & 0\\ \hline
                  a_1^3 & 0 & 0\\
                  0 & a_1^3 & 0\\
                  0 & 0 & 0\\
                \end{array}
              \right),\quad F^2=-\frac{a_1^4}{2}I_2,\quad a_1>0,\label{caseii}
\end{eqnarray}
where
\begin{eqnarray}
|G|^2+|K|^2=|D|^2+|N|^2,\quad \Re(\bar{K}G+\bar{N}D)=0,\quad
\Re(\bar{N}G+\bar{K}D)=0,\quad \Im(\bar{N}G-\bar{K}D)=a_1^3,\quad G, D, K, N\in\C.\label{cond2}
\end{eqnarray}

For example, we have solutions with
\begin{eqnarray}
K=N,\qquad G=-D,\qquad \Im(\bar{K}G)=\frac{a_1^3}{2}.
\end{eqnarray}

\item the solutions with zero mass and zero current
\begin{eqnarray}
m=0,\qquad \Psi=\left(
  \begin{array}{cc}
    G & B \\
    C & D \\ \hline
    C & D \\
    G & B \\
  \end{array}
\right),\qquad  A=\left(
                \begin{array}{ccc}
                  1 & 0 & 0\\ \hline
                  1 & 0 & 0\\
                  0 & 0 & 0\\
                  0 & 0 & 0\\
                \end{array}
              \right),\qquad
J=0,\qquad F^2=0,\label{caseiii}
\end{eqnarray}
where
\begin{eqnarray}
|G|^2+|C|^2=|B|^2+|D|^2,\qquad \bar{G}B+\bar{C}D=0,\qquad G, B, C, D\in\C.\label{cond1}
\end{eqnarray}

For example, we have solutions with
\begin{eqnarray}G=D,\qquad B=-C.
\end{eqnarray}
\end{enumerate}
\end{theorem}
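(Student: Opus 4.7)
The plan is to reduce $A$ to a normal form by HSVD, solve the linear equation (\ref{sec}) for $\Psi$ in each resulting case, and then use the cubic equation (\ref{fir}) as a consistency check that eliminates most of the cases. Concretely, I would first apply Theorem \ref{thHSVD} with $(p,q,n,N)=(1,3,4,3)$, using the Lorentz action $Q\in\OO(1,3)$ from (\ref{lorentz}) on the rows of $A$ and the orthogonal action $P$ from (\ref{tr11}) on the columns (with a simultaneous sign flip $(P,Q)\mapsto(-P,-Q)$ if needed to force $P\in\SO(3)$). Under the combined invariance (\ref{trans2}), this brings $A\neq 0$ to one of the nine canonical matrices listed in (\ref{cases}), classified by the triples $(d_A,x_A,y_A)$ of Table \ref{tab1}.

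Second, in each of the nine cases I would expand the nested commutator $[A_\mu,[A^\mu,A^\nu]]$ in the basis $\tau^a$ using $[\tau^a,\tau^b]=\epsilon^{abc}\tau^c$ and the metric $\eta=\diag(1,-1,-1,-1)$, obtaining the $4\times 3$ matrix that $J$ is forced to equal by (\ref{fir}); this determines the parameters $(d_J,x_J,y_J)$ recorded in Table \ref{tab1}. In parallel I would write out (\ref{sec}) with $\Psi$ in the block form (\ref{notat}): the gamma matrices (\ref{Dgm}) couple the upper and lower $2\times 2$ blocks $\Psi_1,\Psi_2$ through the Pauli matrices $\sigma^\mu$, so the equation decomposes into a small linear system whose solution for $\Psi$ depends only on the mass $m$ and the hyperbolic singular values $a_1,a_2,a_3$.

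Third, I would substitute the resulting $\Psi$ into the quadratic current $J=i\Psi^\dagger\gamma^0\gamma^\nu\Psi-\tfrac12\tr(i\Psi^\dagger\gamma^0\gamma^\nu\Psi)I_2$ and compare it entry by entry with the commutator output of the previous step. This matching eliminates cases $2,3,4,5,7,9$ of Table \ref{tab1} (either forcing $\Psi=0$ or destroying the assumed rank pattern of $A$); the surviving cases $1$, $6$, $8$ produce the three families (\ref{caseiii}), (\ref{casei}), (\ref{caseii}). In Case $6$ the matching fixes $a_1=a_2=a_3=2m$ with $m>0$ and collapses $\Psi$ to a single complex entry $L$ with $|L|=4m^{3/2}$; in Cases $1$ and $8$ it forces $m=0$ and leaves the quadratic constraints (\ref{cond1}) and (\ref{cond2}) on the free entries of $\Psi$, which are further trimmed by the residual stabilizer of $A$ inside the group of symmetries (\ref{trans2}). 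The values of $F^2$ then follow by a direct computation from $F^{\mu\nu}=-[A^\mu,A^\nu]$.

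The hardest step is the third one. The case analysis is combinatorially branching, and in the mixed cases (in particular $3$, $5$, $6$, $9$, where the canonical $A$ has two or three nonzero columns) one must show that the polynomial identity between the cubic commutator in $A$ and the quadratic current in $\Psi$ has no nontrivial solution outside of the three listed types. The algebra is not deep but is lengthy and requires careful tracking of signs, block structure, and rank strata; the value of using the HSVD rather than an ad hoc normalization is precisely that Theorem \ref{thHSVD} certifies that the nine cases of (\ref{cases}) exhaust all orbits, so no exotic solution can slip through.
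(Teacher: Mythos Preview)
Your plan is essentially identical to the paper's own proof in Appendix \ref{secA}: reduce $A$ to the nine HSVD canonical forms (\ref{cases}), solve the linear equation (\ref{sec}) for $\Psi$ in the block notation (\ref{notat}) case by case (the paper groups them as $1$--$3$, $4$--$6$, $7$--$9$), and then impose (\ref{fir}) by computing the cubic commutator and the quadratic current separately and matching. The only minor deviation is your remark about trimming by the residual stabilizer of $A$; the paper does not carry out any such further reduction and simply records the full families (\ref{cond1}), (\ref{cond2}) as they come out of the matching, so that step is unnecessary for the theorem as stated.
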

\begin{proof} The detailed proof is presented in Appendix \ref{secA}. \end{proof}

Now let us consider the degenerate solutions $(\Psi, A)$ with $A=0$ or $\Psi=0$.

If $\Psi=0$, then (\ref{sec}) holds for arbitrary $A$, the current is zero $J=0$, and (\ref{fir}) is reduced to
\begin{eqnarray}
[A_\mu, [A^\mu, A^\nu] ]=0.
\end{eqnarray}
Using the results of \cite{Shirokov2}, we get Theorem \ref{thR}. The parameters $d_A$, $x_A$, and $y_A$ for these solutions are presented in Table \ref{tab2}.

\begin{table}[ht]
\caption{The degenerate solutions with $\Psi=0$}\label{tab2}%
\begin{ruledtabular}
\begin{tabular}{ccccccc}
case & $d_A$ & $x_A$ & $y_A$ & $d_J$ & $x_J$ & $y_J$  \\  
\hline
1 &0&0&0&0&0&0\\ 
2 &1&0&0&0&0&0 \\ 
3 &0&1&0&0&0&0 \\ 
4 &0&0&1&0&0&0 \\ 
\end{tabular}
\end{ruledtabular}
\end{table}

\begin{theorem} \label{thR} All solutions $(\Psi, A)$  with $\Psi=0$ of the system of equations (\ref{fir}), (\ref{sec}) can be reduced using transformations (\ref{trans2}) to the following four types of solutions:
\begin{enumerate}
  \item \begin{eqnarray}
 \Psi=0,\qquad A=0,\qquad J=0,\qquad F^2=0;\label{triv1}
  \end{eqnarray}
  \item \begin{eqnarray}\Psi=0,\qquad A=\left(
                                              \begin{array}{ccc}
                                                a & 0 & 0 \\ \hline
                                                0 & 0 & 0 \\
                                                0 & 0 & 0 \\
                                                0 & 0 & 0 \\
                                              \end{array}
                                            \right),\qquad a\in\R\setminus\{0\},\qquad J=0,\qquad F^2=0;\label{triv2}\end{eqnarray}
  \item \begin{eqnarray}\Psi=0,\qquad A=\left(
                                              \begin{array}{ccc}
                                                0 & 0 & 0 \\ \hline
                                                a & 0 & 0 \\
                                                0 & 0 & 0 \\
                                                0 & 0 & 0 \\
                                              \end{array}
                                            \right),\qquad a\in\R\setminus\{0\},\qquad J=0,\qquad F^2=0;\label{triv3}\end{eqnarray}
  \item \begin{eqnarray}\Psi=0,\qquad A=\left(
                                              \begin{array}{ccc}
                                                1 & 0 & 0 \\ \hline
                                                1 & 0 & 0 \\
                                                0 & 0 & 0 \\
                                                0 & 0 & 0 \\
                                              \end{array}
                                            \right),\qquad J=0,\qquad F^2=0.\label{triv4}\end{eqnarray}
\end{enumerate}
\end{theorem}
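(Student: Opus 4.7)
My first step is to observe that setting $\Psi=0$ makes equation (\ref{sec}) vacuous and forces $J^\nu\equiv 0$, so the entire system collapses to the purely algebraic condition $[A_\mu,[A^\mu,A^\nu]]=0$ on the matrix $A\in\R^{4\times 3}$ with columns in the Lie algebra $\su(2)$. The problem thus becomes the classification of constant $\su(2)$-valued potentials with vanishing current, which is exactly the setting handled in \cite{Shirokov2} and which can be attacked with the tools already set up in Section \ref{sectCS}.

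Next, I would apply the combined gauge-plus-Lorentz symmetry (\ref{trans2}) together with Theorem \ref{thHSVD} (with $p=1$, $q=3$, $n=4$, $N=3$) to bring $A$ into one of ten canonical forms: the nine displayed in (\ref{cases}) plus the trivial $A=0$. Each form is labelled by the triple $(d_A,x_A,y_A)$, which is invariant under (\ref{trans2}); this is what makes the case analysis finite and genuinely exhaustive.

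For each canonical form I would then check the reduced equation $[A_\mu,[A^\mu,A^\nu]]=0$ directly. In the four cases recorded in Table \ref{tab2} every nonzero entry of $A$ lies in a single column, so all $A^\mu$ are real multiples of the same generator $\tau^1$; the bracket $[A^\mu,A^\nu]$ then vanishes identically and the equation holds automatically. Reading off the corresponding $A$, $J$, and $F^{\mu\nu}=-[A^\mu,A^\nu]$ yields the four families (\ref{triv1})--(\ref{triv4}), together with $J=0$ and $F^2=0$ as stated.

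The main obstacle is to rule out the remaining six canonical forms from (\ref{cases}) (cases 2, 3, 5, 6, 8, 9 of Table \ref{tab1}). In each of these $A$ engages at least two distinct generators $\tau^a,\tau^b$, so some commutator $[A^\mu,A^\nu]$ is a nonzero multiple of $\tau^c$ with $c\notin\{a,b\}$. Using the structure relations $[\tau^a,\tau^b]=\varepsilon^{ab}{}_c\tau^c$ and the metric $\eta=\diag(1,-1,-1,-1)$, I would compute $[A_\mu,[A^\mu,A^\nu]]$ componentwise and argue that, after summing over $\mu$, at least one component along some $\tau^a$ is a strictly nonzero polynomial in the positive hyperbolic singular values $a_1,a_2,a_3$. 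The delicate point is a cancellation analysis that depends on whether a lightlike column is present ($d_A=1$) and on how positive- and negative-norm columns pair through $\eta$, so the six subcases must be treated separately; this calculation is already carried out in \cite{Shirokov2}, and combining it with the three cases above completes the classification.
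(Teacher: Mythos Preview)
Your proposal is correct and follows the same route as the paper: reduce to $[A_\mu,[A^\mu,A^\nu]]=0$, bring $A$ to HSVD canonical form via Theorem~\ref{thHSVD}, and invoke \cite{Shirokov2} for the case analysis. The paper itself is terser---it simply cites \cite{Shirokov2} and records the outcome in Table~\ref{tab2}---so your sketch of why the single-column forms trivially satisfy the equation and why the remaining six forms force a nonzero component of $J$ is actually more explicit than what the paper provides.
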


Now let us consider the degenerate solutions $(\Psi, A)$ with $A=0$.

If $A=0$ and $m\neq 0$, then $J=0$ and $\Psi=0$, i.e. we get the solutions (\ref{triv1}).

If $A=0$ and $m=0$, then $J=0$, but $\Psi$ can be not zero. We get the equation
\begin{eqnarray}
J^\nu:=i \Psi^\dagger \gamma^0 \gamma^\nu \Psi-\frac{1}{2}\tr(i \Psi^\dagger \gamma^0 \gamma^\nu\Psi)I_2=0.\label{currentzero}
\end{eqnarray}
Substituting an arbitrary
\begin{eqnarray}
\Psi=\left(
  \begin{array}{cc}
    G & B \\
    C & D \\
    K & L \\
    M & N \\
  \end{array}
\right),
\end{eqnarray}
we get
\begin{eqnarray}
J^\nu=\frac{i}{2}\left(\begin{array}{cc}X^\nu & 2Y^\nu \\ 2\bar{Y}^\nu &-X^\nu\\ \end{array} \right),
\end{eqnarray}
where
\begin{eqnarray*}
&&X^0=|G|^2+|C|^2+|K|^2+|M|^2-|B|^2-|D|^2-|L|^2-|N|^2,\qquad
Y^0=\bar{G}B+\bar{C}D+\bar{K}L+\bar{M}N,\\
&&X^1=\bar{M}G+\bar{K}C+\bar{C}K+\bar{G}M-\bar{N}B-\bar{L}D-\bar{D}L-\bar{B}N,\qquad
Y^1=\bar{M}B+\bar{K}D+\bar{C}L+\bar{G}N,\\
&&X^2=i(\bar{M}G-\bar{K}C+\bar{C}K-\bar{G}M-\bar{N}B+\bar{L}D-\bar{D}L+\bar{B}N),\qquad
Y^2=i(\bar{M}B-\bar{K}D+\bar{C}L-\bar{G}N),\\
&&X^3=\bar{K}G-\bar{M}C+\bar{G}K-\bar{C}M-\bar{L}B+\bar{N}D-\bar{B}L+\bar{D}N,\qquad
Y^3=\bar{K}B-\bar{M}D+\bar{G}L-\bar{C}N.
\end{eqnarray*}
Using $J^\nu=0$, we get the conditions
\begin{eqnarray}
&&|G|^2+|C|^2+|K|^2+|M|^2=|B|^2+|D|^2+|L|^2+|N|^2,\quad
\bar{M}G+\bar{C}K-\bar{N}B-\bar{D}L=0,\quad \bar{L}G-\bar{N}C+\bar{B}K-\bar{D}M=0,\label{condit}\\
&&\bar{G}B+\bar{C}D+\bar{K}L+\bar{M}N=0,\quad \bar{N}G+\bar{D}K=0,\quad \bar{L}C+\bar{B}M=0,\quad \Re(\bar{K}G-\bar{M}C-\bar{L}B+\bar{N}D)=0.\nonumber
\end{eqnarray}
Another way to obtain the solutions of (\ref{currentzero}) is to use the invariance under pseudo-unitary group $\SU(2,2)$. Namely, the system of equations (\ref{fir}), (\ref{sec}) is invariant under the transformation (see \cite{Marchuk})
\begin{eqnarray}
\Psi \to W^{-1}\Psi,\qquad \gamma^\mu \to W^{-1}\gamma^\mu W,\qquad W\in\SU(2,2).\label{trans3}
\end{eqnarray}
Combining the transformations (\ref{tr11}) and (\ref{trans3}), we conclude that the system  (\ref{fir}), (\ref{sec}) is invariant under the transformation
\begin{eqnarray}
\Psi \to W^{-1}\Psi S,\qquad A \to A P,\qquad \gamma^\mu \to W^{-1}\gamma^\mu W,\qquad
W\in\SU(2,2),\qquad S\in\SU(2),\qquad P\in\SO(3),\label{trans}
\end{eqnarray}
where $S$ and $P$ are related as (\ref{cover}).

We can choose the matrices $W^{-1}\in\SU(2,2)$ and $S\in\SU(2)$ such that $\Psi$ is in canonical form. We use the following theorem on the hyperbolic singular decomposition for the complex matrix $\Psi\in\C^{4\times 2}$ (see \cite{Shirokovhsvd}). Note that the matrix $\Sigma^\Psi$ (\ref{DD22}) has 4 rows and 2 columns; this means that some blocks of the matrix $\Sigma^\Psi$ have size zero.

\begin{theorem} Assume $\omega=\diag(1, 1, -1, -1)$. For an arbitrary matrix $\Psi\in\C^{4\times 2}$, there exist
$R\in\U(2)$ and $L\in\U(2, 2)$
such that
\begin{eqnarray}
L^\dagger \Psi R=\Sigma^\Psi,\label{new2}
\end{eqnarray}
where
\begin{eqnarray}
\Sigma^\Psi=\left(
      \begin{array}{cccc}
      I_d &        \Zero & \Zero & \Zero \\
    \Zero &    X_{x} & \Zero & \Zero \\
    \Zero &    \Zero & \Zero & \Zero \\
    \hline
    I_d &    \Zero & \Zero & \Zero \\
    \Zero &   \Zero & Y_y & \Zero \\
   \Zero &    \Zero & \Zero & \Zero\\
      \end{array}
    \right)
    \begin{array}{c}
      \left.
      \begin{array}{c}
          \\
         \\
          \\
      \end{array}
    \right\}2 \\
      \left.
      \begin{array}{c}
          \\
         \\
          \\
      \end{array}
    \right\}2
    \end{array}
    \in\R^{4\times 2},\label{DD22}
\end{eqnarray}
where the first block has $2$ rows and the second block has $2$ rows, $X_{x}$ and $Y_y$ are diagonal matrices of corresponding dimensions $x$ and $y$ with all positive uniquely determined diagonal elements (up to a permutation). We denote zero blocks by $\Zero$.

Moreover, choosing $R$, one can swap columns of the matrix $\Sigma^\Psi$. Choosing $L$, one can swap rows in individual blocks but not across blocks. Thus we can always arrange diagonal elements of the matrices $X_x$ and $Y_y$ in decreasing (or ascending) order.
Here we have
\begin{eqnarray}
d=\rank (\Psi)-\rank (\Psi^\dagger \omega \Psi),\qquad x+y=\rank (\Psi^\dagger \omega \Psi),
\end{eqnarray}
$x$ is the number of positive eigenvalues of the matrix $\Psi^\dagger \omega \Psi$, and $y$ is the number of negative eigenvalues of the matrix $\Psi^\dagger \omega \Psi$.
\end{theorem}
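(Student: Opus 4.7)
The argument parallels the proof of Theorem~\ref{thHSVD} for the real case, with unitary and pseudo-unitary groups replacing their orthogonal counterparts. First I would verify that the quantities entering the statement are genuine invariants of the action $\Psi \mapsto L^\dagger \Psi R$. Since $L\in\U(2,2)$ satisfies $L^\dagger \omega L=\omega$, inverting both sides gives $L\omega L^\dagger =\omega$, so $\Psi^\dagger \omega \Psi \mapsto R^\dagger (\Psi^\dagger \omega \Psi) R$ is a unitary conjugation by $R\in\U(2)$. Consequently the inertia indices (numbers of positive, negative, and zero eigenvalues) of the Hermitian $2\times 2$ matrix $M:=\Psi^\dagger \omega \Psi$ are preserved, as is $\rank(\Psi)$, confirming that $x$, $y$, and $d=\rank(\Psi)-\rank(M)$ are well defined.

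Next I would reduce $M$ to diagonal form. By the spectral theorem for Hermitian matrices there exists $R\in\U(2)$ such that $R^\dagger M R=\diag(\lambda_1,\lambda_2)$, with the eigenvalues ordered positive first, then negative, then zero. Replacing $\Psi$ by $\tilde\Psi:=\Psi R$, the two columns $v_1,v_2$ of $\tilde\Psi$ are pairwise $\omega$-orthogonal and satisfy $v_i^\dagger \omega v_i=\lambda_i$.

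The core of the construction is a pseudo-unitary matrix $L\in\U(2,2)$ sending $\tilde\Psi$ to $\Sigma^\Psi$, built by treating each column according to the sign of its $\omega$-norm. A column with $v^\dagger \omega v=a^2>0$ is, after the rescaling $v/a$, taken to be an $\omega$-unit basis vector of positive signature, contributing the diagonal entry $a$ to $X_x$; a column with $v^\dagger \omega v=-b^2<0$ is treated analogously and contributes $b$ to $Y_y$; a nonzero null column (which arises exactly when $\rank(\Psi)>\rank(M)$) is decomposed as the sum of a positive $\omega$-unit vector and a negative $\omega$-unit vector, producing the paired $I_d$ blocks in the upper and lower halves of $\Sigma^\Psi$; zero columns map to the all-zero columns of $\Sigma^\Psi$.

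The main technical obstacle is to show that the partial $\omega$-orthonormal system built from the rescaled columns of $\tilde\Psi$ can always be extended to a complete $\omega$-orthonormal basis of $(\C^4,\omega)$. This follows from Witt's extension theorem for nondegenerate Hermitian forms of signature $(2,2)$; equivalently, one performs a pseudo-unitary Gram--Schmidt procedure in which each nonzero null vector is paired with a hyperbolic partner of the opposite signature. The inverse of the resulting basis transformation is the required $L^{-\dagger}$. Finally, within-block row permutations of $L$ together with column permutations of $R$ arrange the positive diagonal entries of $X_x$ and $Y_y$ in any prescribed order; restrictions on swapping rows across blocks come from the fact that permutations of $\{1,2\}$ with $\{3,4\}$ do not lie in $\U(2,2)$.
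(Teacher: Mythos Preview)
The paper does not supply its own proof of this theorem; it is quoted from the external reference \cite{Shirokovhsvd} and used as a tool. Your sketch is essentially correct and follows a natural route: diagonalize the Hermitian form $M=\Psi^\dagger\omega\Psi$ on the right by $R\in\U(2)$, then build $L\in\U(2,2)$ column by column and complete via Witt's extension theorem. The paper's surrounding remarks (on adjusting the determinants of $S$ and $W^{-1}$) indicate that the cited proof instead constructs $L$ directly from eigenvectors, and when $d\neq 0$ from generalized eigenvectors, of the $4\times 4$ matrix $\omega\Psi\Psi^\dagger$; that symmetric two-sided eigenvector argument is the standard SVD-style construction, whereas your one-sided-diagonalization-plus-Witt approach is more geometric but requires a little more case analysis. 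One small point your sketch glosses over: when $M$ has a repeated zero eigenvalue (for instance $M=0$ with $\rank(\Psi)=1$), the diagonalizing $R$ is not unique, and you must exploit the residual unitary freedom in $R$ to force one column of $\Psi R$ to vanish; otherwise you obtain two nonzero null columns while $d=1$, and the column-by-column recipe no longer matches the block structure of $\Sigma^\Psi$.
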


We can always find the matrices $W^{-1}\in\SU(2,2)$ and $S\in\SU(2)$ from the special (pseudo-)unitary groups, neglecting the requirement that the elements of the matrix $\Sigma^\Psi$ are positive. If the matrix $S$ has the determinant $-1$, then we can change the sign of one of the columns of the matrix $S$ (which is the eigenvector of the matrix $\Psi^\dagger \omega \Psi$, see the details in \cite{Shirokovhsvd}) and at the same time we should change the sign of the corresponding column of the matrix $\Sigma^\Psi$. If the matrix $W^{-1}$ has the determinant $-1$, then we can change the sign of one of the rows of the matrix $W^{-1}$ (which is the eigenvector or the generalized eigenvector in the case $d\neq 0$ of the matrix $\omega \Psi\Psi^\dagger $, see the details in \cite{Shirokovhsvd}) and at the same time we should change the sign of the corresponding row of the matrix $\Sigma^\Psi$.

 The canonical form of $\Psi$ is characterized by three parameters $d_{\Psi}$, $x_{\Psi}$, and $y_{\Psi}$. The possible values of these parameters are presented in Table \ref{tab3}.

\begin{table}[ht]
\caption{The degenerate solutions with $\Psi\neq 0$, $A=0$}\label{tab3}%
\begin{ruledtabular}
\begin{tabular}{cccccccccc}
case &$d_A$&$x_A$&$y_A$&$d_J$&$x_J$&$y_J$ & $d_{\Psi}$ & $x_{\Psi}$ & $y_{\Psi}$\\
\hline
1 &-&-&-&-&-&- &1 & 0 & 0 \\ 
2 &0&0&0&0&0&0 &2 & 0 & 0\\ 
3 &-&-&-&-&-&- &1 & 1 & 0\\ 
4 &-&-&-&-&-&-&1 & 0 & 1\\ 
5 &0&0&0&0&0&0 &0 & 2 & 0\\ 
6 &-&-&-&-&-&-&0 & 1 & 0\\ 
7 &0&0&0&0&0&0&0 & 0 & 2 \\ 
8 &-&-&-&-&-&- &0 & 0 & 1 \\ 
9&-&-&-&-&-&- &0 & 1 & 1 \\ 
\end{tabular}
\end{ruledtabular}
\end{table}

The matrix $\Psi$ has the following explicit form in these nine cases:
\begin{eqnarray}
              \left(
                \begin{array}{cc}
                  1 & 0 \\
                  0 & 0 \\ \hline
                  1 & 0 \\
                  0 & 0 \\
                \end{array}
              \right),\quad
              \left(
                \begin{array}{cc}
                  1 & 0 \\
                  0 & \pm 1 \\ \hline
                  1 & 0 \\
                  0 & \pm/\mp 1 \\
                \end{array}
              \right),\quad
              \left(
                \begin{array}{cc}
                  \psi_1 & 0 \\
                  0 & 1 \\ \hline
                  0 & 1 \\
                  0 & 0 \\
                \end{array}
              \right),\quad
              \left(
                \begin{array}{cc}
                  0 & 1 \\
                  0 & 0 \\ \hline
                  \psi_1 & 0 \\
                  0 & 1 \\
                \end{array}
              \right),\quad
             \left(
                \begin{array}{cc}
                  \psi_1 & 0 \\
                  0 & \psi_2 \\ \hline
                  0 & 0 \\
                  0 & 0 \\
                \end{array}
              \right),\label{psi}\\
      \left(
                \begin{array}{cc}
                  \psi_1 & 0 \\
                  0 & 0 \\ \hline
                  0 & 0 \\
                  0 & 0 \\
                \end{array}
              \right),\quad
              \left(
                \begin{array}{cc}
                  0 & 0 \\
                  0 & 0 \\ \hline
                  \psi_1 & 0 \\
                  0 & \psi_2 \\
                \end{array}
              \right),\quad
              \left(
                \begin{array}{cc}
                  0 & 0 \\
                  0 & 0 \\ \hline
                  \psi_1 & 0 \\
                  0 & 0 \\
                \end{array}
              \right),\quad
              \left(
                \begin{array}{cc}
                  \psi_1 & 0 \\
                  0 & 0 \\ \hline
                  0 & \psi_2 \\
                  0 & 0 \\
                \end{array}
              \right)\in\R^{4\times 2},\nonumber
\end{eqnarray}
where the hyperbolic singular values $\psi_1$ and $\psi_2$ are positive real numbers except in the following cases (because of the reasons discussed above on the determinants of the matrices $S$ and $W$): in the third and fourth cases $\psi_1$ can be negative; in the fifth, seventh, and ninth cases $\psi_2$ can be negative; also note that in the second case all four possible pairs of signs can be realized.

The matrices $\gamma^\mu$ do not depend on the point $x\in\R^{1,3}$, that's why we can use standard Dirac gamma-matrices for $W^{-1}\gamma^\mu W$.

Finally, we get Theorem \ref{lemmazerocurrent}.

\begin{theorem}\label{lemmazerocurrent} All solutions $(\Psi, A)$  with $A=0$ of the system of equations (\ref{fir}), (\ref{sec}) with $m=0$ can be reduced using transformations (\ref{trans}) to the following three types of solutions:
\begin{enumerate}
\item \begin{eqnarray}\Psi= \left(
                \begin{array}{cc}
                  \psi & 0 \\
                  0 & \pm\psi \\ \hline
                  0 & 0 \\
                  0 & 0 \\
                \end{array}
              \right),\qquad \psi>0,\qquad A=0,\qquad J=0,\qquad F^2=0;
              \label{tr1}\end{eqnarray}
\item  \begin{eqnarray}\Psi= \left(
                \begin{array}{cc}
                  0 & 0 \\
                  0 & 0 \\ \hline
                  \psi & 0 \\
                  0 & \pm\psi \\
                \end{array}
              \right),\qquad \psi>0,\qquad  A=0,\qquad J=0,\qquad F^2=0;
              \label{tr2}\end{eqnarray}

\item  \begin{eqnarray}\Psi= \left(
                \begin{array}{cc}
                  1 & 0 \\
                  0 & \pm 1 \\ \hline
                  1 & 0 \\
                  0 & \mp 1\\
                \end{array}
              \right),\qquad A=0,\qquad J=0,\qquad F^2=0.
              \label{tr3}\end{eqnarray}
\end{enumerate}
Note that $\psi$ is a real positive number and $\Psi\in\R^{4\times 2}$ in all three cases.
\end{theorem}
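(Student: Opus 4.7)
The plan is to reduce the problem to classifying complex $4{\times}2$ matrices $\Psi$ that satisfy the single algebraic condition $J^\nu=0$, and then to brute-force the reduction against the nine HSVD canonical forms already listed in (\ref{psi}).

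First I would observe that when $A=0$ and $m=0$, equation (\ref{sec}) is identically satisfied and (\ref{fir}) collapses to the zero-current equation (\ref{currentzero}), since the left-hand side $[A_\mu,[A^\mu,A^\nu]]$ vanishes. The Yang--Mills strength is automatically $F^{\mu\nu}=0$, so the only nontrivial task is to determine all $\Psi\in\C^{4\times 2}$ satisfying $J^\nu=0$, modulo the transformation group in (\ref{trans}). This is exactly the setting in which the pseudo-unitary $\SU(2,2)$-invariance becomes available, because the matrices $\gamma^\mu$ retain their standard form under simultaneous similarity and multiplication of $\Psi$.

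Next, I would invoke the HSVD theorem for $\Psi$ with $\omega=\diag(1,1,-1,-1)$, using $W^{-1}\in\SU(2,2)$ as the left factor and $S\in\SU(2)$ as the right factor, to put $\Psi$ into one of the nine canonical forms in (\ref{psi}), carrying along the sign ambiguities in $\psi_1,\psi_2$ that arise from enforcing unit determinant on $W$ and $S$. For each canonical $\Psi$ I would substitute directly into the explicit component expressions $X^\nu,Y^\nu$ displayed immediately above (\ref{condit}), and read off which of the eight scalar equations in (\ref{condit}) fail. The cases that survive are exactly Cases 5, 7, and the $(+,-)/(-,+)$ subcases of Case 2: in Case 5 the constraint $X^0=0$ forces $\psi_1^2=\psi_2^2$, hence $\psi_2=\pm\psi_1$, giving (\ref{tr1}); Case 7 is parallel, giving (\ref{tr2}); and in Case 2 the constraint $Y^1=\bar{K}D+\bar{G}N=0$ singles out the configurations with opposite signs between the two lower entries, yielding (\ref{tr3}). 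Cases 1, 6, 8 are ruled out already by $X^0\neq 0$; Cases 3, 4, 9 are ruled out because a sandwich-type term like $\bar L D$, $\bar B N$, or $\bar G L$ appears in $X^1$ or $Y^3$ with no possible cancellation.

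The main obstacle is not any single step but the bookkeeping: there are nine canonical forms, each with up to four sign choices and two positive parameters, and eight quadratic conditions (\ref{condit}) to check. The hardest subtlety is handling the sign ambiguities in $\psi_1,\psi_2$ introduced by passing from $\U(2,2)\times\U(2)$ to $\SU(2,2)\times\SU(2)$, and correctly matching the residual freedom in the diagonal entries with the pairing structure that appears in the cross-terms $\bar K D+\bar G N$ and $\bar K B+\bar G L$. Once this is done carefully, the three listed families exhaust all solutions and Theorem \ref{lemmazerocurrent} follows.
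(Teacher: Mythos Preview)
Your proposal is correct and follows essentially the same approach as the paper's proof in Appendix~\ref{secB}: reduce $\Psi$ to one of the nine HSVD canonical forms (\ref{psi}) and test each against the zero-current condition. The paper carries out the check by computing the full matrices $J^\mu$ case by case, whereas you propose to read off the scalar quantities $X^\nu,Y^\nu$ and match them against the list (\ref{condit}); these are equivalent bookkeeping devices, and your identification of which constraint kills each case (e.g.\ $X^0$ for Cases 1, 6, 8; $Y^1$ to separate the sign subcases in Case~2; the $\bar G L$ or $\bar L D$ terms in $X^1$ or $Y^3$ for Cases 3, 4, 9) agrees with the paper's explicit computations.
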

\begin{proof} The detailed proof is given in Appendix \ref{secB}.
\end{proof}

Note that in the general case, we can not choose the matrices $Q\in\OO(1,3)$, $P\in\SO(3)$, $W\in\SU(2,2)$, $S\in\SU(2)$, and $T\in{\rm Pin}(1, 4)$ such that the solutions $A$ and $\Psi$ of the system (\ref{fir}), (\ref{sec}) are in canonical form at the same time, because the matrices $S$ and $P$ are related by (\ref{cover}) and the matrices $T$ and $Q$ are related by (\ref{cover2}). However we can calculate the invariants $d_\Psi$, $x_\Psi$, $y_\Psi$ for $\Psi$ in the three cases of Theorem \ref{thDYM2}.

In Case 1 of Theorem \ref{thDYM2}, using the explicit form of $\Psi$ (\ref{casei}), we conclude that
$$
d_\Psi=0,\qquad x_\Psi=0,\qquad y_\Psi=1.
$$

In Case 2 of Theorem \ref{thDYM2}, we have $\Psi\neq 0$ and from (\ref{cond2}) if follows that $\rank(\Psi)\neq 1$. Thus $\rank(\Psi)=2$. We get $x_\Psi+y_\Psi+d_\Psi=2$. Using the explicit form of $\Psi$ (\ref{caseii}) and the conditions (\ref{cond2}), we get
\begin{eqnarray*}
\Psi^\dagger \omega \Psi=\left(
                             \begin{array}{cccc}
                               \bar{G} & -i\bar{D} & \bar{K} & -i\bar{N} \\
                               -i\bar{G} & \bar{D} & -i\bar{K} & \bar{N} \\
                             \end{array}
                           \right) \left(
                                     \begin{array}{cccc}
                                       1 & 0 & 0 & 0 \\
                                       0 & 1 & 0 & 0 \\
                                       0 & 0 & -1 & 0 \\
                                       0 & 0 & 0 & -1 \\
                                     \end{array}
                                   \right) \left(
                                             \begin{array}{cc}
                                               G & iG \\
                                               iD & D \\
                                               K & iK \\
                                               iN & N \\
                                             \end{array}
                                           \right)
=\left(
                                                     \begin{array}{cc}
                                                       2(|D|^2-|K|^2) & 2i(|N|^2-|K|^2) \\
                                                       -2i(|N|^2-|K|^2) & 2(|D|^2-|K|^2) \\
                                                     \end{array}
                                                   \right).
                                                   \end{eqnarray*}
The determinant of this matrix equals
\begin{eqnarray*}
(|D|^2-|K|^2)^2(|N|^2-|K|^2)^2=(|D|^2-|N|^2)(|D|^2+|N|^2-2|K|^2)
=(|D|^2-|N|^2)(|G|^2-|K|^2).
\end{eqnarray*}
The eigenvalues of the matrix $\Psi^\dagger \omega \Psi$ are $(|D|^2-|N|^2)$ and $(|G|^2-|K|^2)$.

The cases $d_\Psi=2$, $x_\Psi=y_\Psi=0$ are realized when $|D|=|K|=|N|=|G|$. For example, we have the solution
$$
D=G=\sqrt{\frac{a_1^3}{2}},\qquad K=iD,\qquad N=-iD.
$$
The case $d_\Psi=0$, $x_\Psi+y_\Psi=2$ is realized when $|D|\neq |N|$ and $|G|\neq|K|$. For example, we have the solution with $x_\Psi=y_\Psi=1$:
$$D=K=0,\qquad N=\sqrt{a_1^3},\qquad G=i\sqrt{a_1^3}.$$
Also we have solutions with $x_\Psi=2$, $y_\Psi=0$ or $x_\Psi=0$, $y_\Psi=2$ of the following type
$$
K=N,\qquad G=-D,\qquad \Im(\bar{K}G)=\frac{a_1^3}{2}.
$$
The case $d_\Psi=1$, $x_\Psi+y_\Psi=1$ is realized when only one of the following conditions is satisfied: $|D|=|N|$ or $|G|=|K|$. For example, we have the solutions
$$
K=\sqrt{\frac{a_1^3}{\sqrt{2}}},\qquad N=\sqrt{\sqrt{2}a_1^3},\qquad G=iK,\qquad D=0.
$$

In Case 3 of Theorem \ref{thDYM2}, using the explicit form of $\Psi$ (\ref{caseiii}) and the conditions (\ref{cond1}), we get
$$\Psi^\dagger \omega \Psi=\left(
                             \begin{array}{cccc}
                               \bar{G} & \bar{C} & \bar{C} & \bar{G} \\
                               \bar{B} & \bar{D} & \bar{D} & \bar{B} \\
                             \end{array}
                           \right) \left(
                                     \begin{array}{cccc}
                                       1 & 0 & 0 & 0 \\
                                       0 & 1 & 0 & 0 \\
                                       0 & 0 & -1 & 0 \\
                                       0 & 0 & 0 & -1 \\
                                     \end{array}
                                   \right) \left(
                                             \begin{array}{cc}
                                               G & B \\
                                               C & D \\
                                               C & D \\
                                               G & B \\
                                             \end{array}
                                           \right)=0.$$
Thus $x_\Psi=y_\Psi=0$ and $d_\Psi=\rank(\Psi)$. We consider nonzero $\Psi\neq 0$ in this case and it can be verified that $\rank(\Psi)\neq 1$ because of the conditions (\ref{cond1}). We obtain $d_\Psi=\rank(\Psi)=2$.

We summarize the results on all types of constant solutions of the $\SU(2)$ Yang--Mills--Dirac equations in Table \ref{tab4}.

\begin{table}[ht]
\caption{All types of constant solutions of the $\SU(2)$ Yang--Mills--Dirac equations}\label{tab4}%
\begin{ruledtabular}
\begin{tabular}{ccccccccccccc}
case & $d_A$ & $x_A$ & $y_A$ & $d_J$ & $x_J$ & $y_J$ & $d_\Psi$ & $x_\Psi$ & $y_\Psi$ & $m$ & $F^2$ & $\!\!\!$ expl. form \\ 
\hline
1 &1 & 0 & 0 & 0 & 0 & 0 & 2 & 0 & 0 & $m=0$ & $F^2=0$ & (\ref{caseiii})\\ 
2 &0 & 1 & 2 & 0 & 1 & 0 & 0& 0 & 1  & $m\neq 0$& $F^2\neq 0$ & (\ref{casei})\\
3 &0 & 0 & 2 & 0 & 0 & 2 & \multicolumn{3}{c}{$\!\!\!\!\!\!\!\!\!\!d_\Psi+x_\Psi+y_\Psi=2$} & $m=0$ & $F^2\neq 0$ & (\ref{caseii})\\ 
4 &0&0&0&0&0&0& 0 & 0 & 0 & $\forall m$ &$F^2=0$ & (\ref{triv1})\\ 
5 &1&0&0&0&0&0 &0 & 0 & 0& $\forall m$ &$F^2=0$ &(\ref{triv4})\\ 
6 &0&1&0&0&0&0 &0 & 0 & 0& $\forall m$ &$F^2=0$ &(\ref{triv2})\\ 
7 &0&0&1&0&0&0 &0 & 0 & 0& $\forall m$ &$F^2=0$ &(\ref{triv3})\\ 
8 &0&0&0&0&0&0 &2 & 0 & 0& $m=0$ &$F^2=0$ &(\ref{tr3}) \\ 
9 &0&0&0&0&0&0 &0 & 2 & 0& $m=0$ &$F^2=0$ &(\ref{tr1})\\ 
10 &0&0&0&0&0&0&0 & 0 & 2 & $m=0$ &$F^2=0$ &(\ref{tr2})\\ 
\end{tabular}
\end{ruledtabular}
\end{table}

\section{On nonconstant solutions in the form of perturbation theory series}
\label{sectNS}

In previous section of this paper, we present explicit form of all constant solutions of the $\SU(2)$ Yang--Mills--Dirac equations. Let us consider nonconstant solutions of the system of $\SU(2)$ Yang--Mills--Dirac equations
\begin{eqnarray}
&&\partial_\mu A_\nu-\partial_\nu A_\mu-[A_\mu, A_\nu]=:F_{\mu\nu},\\
&&\partial_\mu F^{\mu\nu} -[A_\mu, F^{\mu\nu}]=J^\nu:=i \Psi^\dagger \gamma^0 \gamma^\nu \Psi-\frac{1}{2}\tr(i \Psi^\dagger \gamma^0 \gamma^\nu\Psi)I_2,\label{YM6}\\
&&i\gamma^\mu (\partial_\mu \Psi+\Psi A_\mu)-m \Psi=0,\qquad m\geq 0,\label{YM7}
\end{eqnarray}
in the form of series of perturbation theory near the constant solutions.

Denote the left-hand part of the system (\ref{YM6}) by
\begin{eqnarray}
H^\nu=H^\nu (A):=\partial_\mu(\partial^\mu A^\nu-\partial^\nu A^\mu- [A^\mu,A^\nu])- [A_\mu,\partial^\mu A^\nu-\partial^\nu A^\mu-[A^\mu,A^\nu]]
\end{eqnarray}
and the left-hand part of the equation (\ref{YM7}) by
\begin{eqnarray}
T=T(\Psi, A):=i\gamma^\mu (\partial_\mu \Psi+\Psi A_\mu)-m \Psi.
\end{eqnarray}
Using these notations, we can rewrite (\ref{YM6}), (\ref{YM7}) in the following way
\begin{eqnarray}
H^\nu(A)=J^\nu(\Psi),\qquad T(\Psi, A)=0.\label{HJT}
\end{eqnarray}
Denote the known constant solutions by $\stackrel{0}{A^\mu}$ and $\stackrel{0}{\Psi}$. Taking the small parameters $\varepsilon <<1$, $\lambda<<1$, we get
\begin{eqnarray}
&&A^\mu=\sum_{k=0}^\infty \varepsilon^k \stackrel{k}{A^\mu}=\stackrel{0}{A^\mu}+\varepsilon \stackrel{1}{A^\mu}+\varepsilon^2 \stackrel{2}{A^\mu}+\cdots,\\
&&\Psi=\sum_{k=0}^\infty \lambda^k \stackrel{k}{\Psi}=\stackrel{0}{\Psi}+\lambda \stackrel{1}{\Psi}+\lambda^2 \stackrel{2}{\Psi}+\cdots
\end{eqnarray}
Substituting these expressions into (\ref{HJT}), we get respectively
\begin{eqnarray}
H^\nu=\sum_{k=0}^\infty \varepsilon^k \stackrel{k}{H^\nu}=\sum_{k=0}^\infty \lambda^k \stackrel{k}{J^\nu}=J^\nu,\qquad T=\sum_{k=0}^\infty\sum_{l=0}^\infty \varepsilon^k \lambda^l \stackrel{k, l}{T}=0,
\end{eqnarray}
where
$$\stackrel{k}{H^\nu}=\stackrel{k}{H^\nu}(\stackrel{0}{A^\mu}, \ldots, \stackrel{k}{A^\mu}),\quad \stackrel{k}{J^\nu}=\stackrel{k}{J^\nu}(\stackrel{0}{\Psi}, \ldots, \stackrel{k}{\Psi}),\quad \stackrel{k, l}{T}=\stackrel{k, l}{T}(\stackrel{0}{A^\mu}, \ldots, \stackrel{k}{A^\mu}, \stackrel{0}{\Psi}, \ldots, \stackrel{l}{\Psi})$$ are some fixed expressions.

The small parameters $\varepsilon$ and $\lambda$ can be connected in different ways, for example, in the way $\lambda =\varepsilon^r$ with some natural number~$r$.

Let us consider the case $\varepsilon=\lambda$ (with $r=1$) further. We use the notation $\stackrel{k, l}{T}=\stackrel{k+l}{T}$ in this case. We get the equations
\begin{eqnarray}
&&\stackrel{k}{H^\nu}(\stackrel{0}{A^\mu}, \ldots, \stackrel{k}{A^\mu}) = \stackrel{k}{J^\nu}(\stackrel{0}{\Psi}, \ldots, \stackrel{k}{\Psi}),\qquad k=0, 1, 2, \ldots,\\
&&\stackrel{k}{T}(\stackrel{0}{A^\mu}, \ldots, \stackrel{k}{A^\mu}, \stackrel{0}{\Psi}, \ldots, \stackrel{k}{\Psi})=0,\qquad k=0, 1, 2, \ldots
\end{eqnarray}
The conditions $\stackrel{0}{H^\nu}(\stackrel{0}{A^\mu})=\stackrel{0}{J^\nu}(\stackrel{0}{\Psi})$ and $\stackrel{0}{T}(\stackrel{0}{A^\mu},\stackrel{0}{\Psi})=0$ hold automatically because $(\stackrel{0}{A^\mu},\stackrel{0}{\Psi})$ are constant solutions of the system (\ref{YM6}), (\ref{YM7}).

For the first approximation, we get the system of linear partial differential equations with constant coefficients
\begin{eqnarray}
\stackrel{1}{H^\nu}(\stackrel{0}{A^\mu}, \stackrel{1}{A^\mu})=\stackrel{1}{J^\nu}(\stackrel{0}{\Psi}, \stackrel{1}{\Psi}),\qquad \stackrel{1}{T}(\stackrel{0}{A^\mu}, \stackrel{1}{A^\mu}, \stackrel{0}{\Psi}, \stackrel{1}{\Psi})=0\label{firstapr}
\end{eqnarray}
for the variables $\stackrel{1}{\Psi}$ and $\stackrel{1}{A^\mu}$. The explicit form of the system (\ref{firstapr}) is
\begin{eqnarray}
\partial_\mu\partial^\mu \stackrel{1}{A^\nu}-\partial_\mu\partial^\nu \stackrel{1}{A^\mu}+[\stackrel{0}{A^\nu}, \partial_\mu \stackrel{1}{A^\mu}]-2[\stackrel{0}{A^\mu}, \partial_\mu \stackrel{1}{A^\nu}]+[\stackrel{0}{A_\mu},\partial^\nu \stackrel{1}{A^\mu}]+[\stackrel{0}{A_\mu},[\stackrel{0}{A^\mu},\stackrel{1}{A^\nu}]]+[\stackrel{0}{A_\mu},[\stackrel{1}{A^\mu}, \stackrel{0}{A^\nu}]]+[\stackrel{1}{A_\mu},[\stackrel{0}{A^\mu},\stackrel{0}{A^\nu}]]\nonumber\\
=i(\stackrel{1}{\Psi}{}^\dagger \gamma^0 \gamma^\nu \stackrel{0}{\Psi}+ \stackrel{0}{\Psi}{}^\dagger \gamma^0 \gamma^\nu \stackrel{1}{\Psi})-\frac{1}{2}\tr(i(\stackrel{1}{\Psi}{}^\dagger \gamma^0 \gamma^\nu \stackrel{0}{\Psi}+ \stackrel{0}{\Psi}{}^\dagger \gamma^0 \gamma^\nu \stackrel{1}{\Psi})),\label{firstapr2}\\
i\gamma^\mu(\partial_\mu \stackrel{1}{\Psi}+\stackrel{1}{\Psi} \stackrel{0}{A_\mu}+\stackrel{0}{\Psi} \stackrel{1}{A_\mu})-m\stackrel{1}{\Psi}=0.
\end{eqnarray}
We can take some solution of this system ($\stackrel{1}{\Psi}$, $\stackrel{1}{A^\mu}$) and substitute it and the constant solution ($\stackrel{0}{\Psi}$, $\stackrel{0}{A^\mu}$) into
\begin{eqnarray}
\stackrel{2}{H^\nu}(\stackrel{0}{A^\mu}, \stackrel{1}{A^\mu}, \stackrel{2}{A^\mu})=\stackrel{2}{J^\nu}(\stackrel{0}{\Psi},\stackrel{1}{\Psi}, \stackrel{2}{\Psi}),\qquad \stackrel{2}{T}({\stackrel{0}{A^\mu}, \stackrel{1}{A^\mu}, \stackrel{2}{A^\mu}}, \stackrel{0}{\Psi},\stackrel{1}{\Psi}, \stackrel{2}{\Psi})=0.
\end{eqnarray}
We get a system of linear partial differential equations with variable coefficients (dependent on $x\in\R^{1,3}$) for the variables $\stackrel{2}{\Psi}$ and $\stackrel{2}{A^\mu}$. We can take one of solutions of this system and substitute it into the next equation. In the same way, we can get $\stackrel{k}{\Psi}$ and $\stackrel{k}{A^\mu}$ for any $k=0, 1, 2, \ldots$ This procedure allows us to obtain approximate solutions of the $\SU(2)$ Yang--Mills--Dirac system of equations up to terms of order $k$.

\bigskip

%


For the solutions 4--7 from Table \ref{tab4}, the system for the first approximation (\ref{firstapr2}) takes the form
\begin{eqnarray}
\partial_\mu\partial^\mu \stackrel{1}{A^\nu}-\partial_\mu\partial^\nu \stackrel{1}{A^\mu}+[\stackrel{0}{A^\nu}, \partial_\mu \stackrel{1}{A^\mu}]-2[\stackrel{0}{A^\mu}, \partial_\mu \stackrel{1}{A^\nu}]+[\stackrel{0}{A_\mu},\partial^\nu \stackrel{1}{A^\mu}]
+[\stackrel{0}{A_\mu},[\stackrel{0}{A^\mu},\stackrel{1}{A^\nu}]]+[\stackrel{0}{A_\mu},[\stackrel{1}{A^\mu}, \stackrel{0}{A^\nu}]]+[\stackrel{1}{A_\mu},[\stackrel{0}{A^\mu},\stackrel{0}{A^\nu}]]=0,\\
i\gamma^\mu(\partial_\mu \stackrel{1}{\Psi}+\stackrel{1}{\Psi} \stackrel{0}{A_\mu})-m\stackrel{1}{\Psi}=0.
\end{eqnarray}

For the solutions 8--10 from Table \ref{tab4}, the system for the first approximation (\ref{firstapr2}) takes the form
\begin{eqnarray}
\partial_\mu\partial^\mu \stackrel{1}{A^\nu}-\partial_\mu\partial^\nu \stackrel{1}{A^\mu}=i(\stackrel{1}{\Psi}{}^\dagger \gamma^0 \gamma^\nu \stackrel{0}{\Psi}+ \stackrel{0}{\Psi}{}^\dagger \gamma^0 \gamma^\nu \stackrel{1}{\Psi})
-\frac{1}{2}\tr(i(\stackrel{1}{\Psi}{}^\dagger \gamma^0 \gamma^\nu \stackrel{0}{\Psi}+ \stackrel{0}{\Psi}{}^\dagger \gamma^0 \gamma^\nu \stackrel{1}{\Psi})),\\
i\gamma^\mu(\partial_\mu \stackrel{1}{\Psi}+\stackrel{0}{\Psi} \stackrel{1}{A_\mu})=0.
\end{eqnarray}



The presented systems can be further studied using various methods of the theory of linear partial differential equations with constant coefficients.

\section{Conclusions}
\label{sectConcl}

In this paper, we present explicit form of all constant solutions of the Yang--Mills--Dirac equations with $\SU(2)$ gauge symmetry in Minkowski space $\R^{1,3}$ (see Theorems  \ref{thDYM2}, \ref{thR}, and~\ref{lemmazerocurrent}). The solutions are written out with an appropriate choice of the coordinate system and gauge. The complete classification of all constant solutions is presented in Table \ref{tab4}. We present explicit formulas for the spinor $\Psi$, the potential $A$, the current $J$, and the invariant $F^2=F_{\mu\nu}F^{\mu\nu}$. Note that the invariant $F^2$ does not depend on the gauge fixing and the coordinate system; it is used in the Lagrangian of the Yang--Mills field. Note that solutions of different types from our classification can not be related by a gauge transformation by construction. Nonconstant solutions are considered in the form of series of perturbation theory using constant solutions as a zeroth approximation. The presented systems for the first approximation can be studied using different methods of the theory of linear partial differential equations with constant coefficients or numerical analysis. For the second and subsequent approximations, the problem reduces to solving the systems of linear partial differential equations with variable coefficients. The results can be used further to obtain a local classification of all (nonconstant) solutions of the classical Yang--Mills--Dirac equations. The results can be used  to describe physical vacuum and better understand quantum gauge theory.

Note the papers \cite{C2, C3} on instability of the constant Yang--Mills fields. Problem of stability of the constant Yang--Mills--Dirac fields presented in this paper is a task for further research. Another task for further research is to generalize the results of this paper to the case of the Lie group $\SU(3)$, which is important for quantum chromodynamics, or the more general case of the special unitary group $\SU(N)$. Another important task is to generalize results to the case of an arbitrary, possibly globally hyperbolic, Lorentzian manifold.

\begin{acknowledgements}

This work is supported by the Russian Science Foundation (project 23-71-10028), https://rscf.ru/en/project/23-71-10028/.


\end{acknowledgements}

\section*{Data Availability Statement}

Data sharing is not applicable to this article as no new data were created or analyzed in this study.


\appendix

\section{The proof of Theorem \ref{thDYM2}}\label{secA}

\textbf{Cases 1, 2, and 3.} Let us consider the first, the second, and the third cases (\ref{cases}) together supposing $a_1\geq 0$ and $a_2\geq 0$. We have
\begin{eqnarray*}
A^0=A_0=A^1=-A_1=\tau^1=\frac{1}{2i}\sigma^3,\qquad
A^2=-A_2=a_1 \tau^2=\frac{a_1}{2i}\sigma^1,\qquad A^3=-A_3=a_2 \tau^3=\frac{a_2}{2i}\sigma^2.
\end{eqnarray*}
Substituting these expressions into (\ref{sec}), we get
\begin{eqnarray*}
\left(\begin{array}{cc} I & 0 \\ 0 & -I \\ \end{array} \right) \left(\begin{array}{c} \Psi_1 \\ \Psi_2 \\ \end{array} \right)\frac{\sigma^3}{2} -\left(\begin{array}{cc} 0 & \sigma^1 \\ -\sigma^1 & 0 \\ \end{array} \right) \left(\begin{array}{c} \Psi_1 \\ \Psi_2 \\ \end{array} \right)\frac{\sigma^3}{2}-\left(\begin{array}{cc} 0 & \sigma^2 \\ -\sigma^2 & 0 \\ \end{array} \right) \left(\begin{array}{c} \Psi_1 \\ \Psi_2 \\ \end{array} \right)\frac{a_1\sigma^1}{2}\\
-\left(\begin{array}{cc} 0 & \sigma^3 \\ -\sigma^3 & 0 \\ \end{array} \right) \left(\begin{array}{c} \Psi_1 \\ \Psi_2 \\ \end{array} \right)\frac{a_2\sigma^2}{2}-m\left(\begin{array}{c} \Psi_1 \\ \Psi_2 \\ \end{array} \right)=\left(\begin{array}{c} 0 \\ 0 \\ \end{array} \right).
\end{eqnarray*}
We obtain the following two equations
\begin{eqnarray}
\Psi_1 \sigma^3-\sigma^1 \Psi_2 \sigma^3-a_1 \sigma^2 \Psi_2 \sigma^1-a_2 \sigma^3 \Psi_2 \sigma^2-2m \Psi_1&=&0,\label{x901}\\
-\Psi_2 \sigma^3+\sigma^1\Psi_1\sigma^3+a_1\sigma^2\Psi_1\sigma^1+a_2\sigma^3\Psi_1\sigma^2-2m\Psi_2&=&0.\label{x902}
\end{eqnarray}
In Case 1, we have $a_1=a_2=0$ and get
\begin{eqnarray}
\Psi_1 \sigma^3-\sigma^1 \Psi_2 \sigma^3-2m \Psi_1&=&0,\label{x01}\\
-\Psi_2 \sigma^3+\sigma^1\Psi_1\sigma^3-2m\Psi_2&=&0.\label{x02}
\end{eqnarray}
Multiplying (\ref{x01}) on the right by $\sigma^3$ and on the left by $\sigma^1$ and using $(\sigma^1)^2=(\sigma^3)^2=I_2$, we get
\begin{eqnarray}
\Psi_2=\sigma^1\Psi_1-2m\sigma^1\Psi_1\sigma^3.\label{x03}
\end{eqnarray}
Substituting (\ref{x03}) into (\ref{x02}) and using $(\sigma^3)^2=I$, we get $m^2 \Psi_1=0$. If $\Psi_1=0$, then using (\ref{x03}), we get $\Psi_2=0$ and $\Psi=0$, i.e. a contradiction. If $m=0$, then we get $\Psi_2=\sigma^1 \Psi_1$. We have $[A_\mu, [A^\mu, A^\nu]]=0$ in this case. Substituting
$$\Psi= \left(\begin{array}{c} \Psi_1 \\ \hline \Psi_2 \\ \end{array} \right)= \left(\begin{array}{c} \Psi_1 \\ \hline \sigma^1 \Psi_1 \\ \end{array} \right)=
\left(
  \begin{array}{cc}
    G & B \\
    C & D \\ \hline
    C & D \\
    G & B \\
  \end{array}
\right)
$$
into (\ref{fir}), we get
\begin{eqnarray*}
J^0=J^1=i\left(\begin{array}{cc} |G|^2+|C|^2-|B|^2-|D|^2 & 2(\bar{G}B+\bar{C}D) \\ 2(\bar{B}G+\bar{D}C) & |B|^2+|D|^2-|G|^2-|C|^2 \\ \end{array} \right),\qquad
J^2=J^3=0.
\end{eqnarray*}
We obtain the solutions (\ref{caseiii}).

In Cases 2 and 3, from (\ref{x901}) and (\ref{x902}), we get
\begin{eqnarray}
\Psi_1 (\sigma^3-2m I)&=&\sigma^1 \Psi_2 \sigma^3+a_1 \sigma^2 \Psi_2 \sigma^1+a_2 \sigma^3 \Psi_2 \sigma^2,\label{x9011}\\
\Psi_2 (\sigma^3+2m I)&=&\sigma^1\Psi_1\sigma^3+a_1\sigma^2\Psi_1\sigma^1+a_2\sigma^3\Psi_1\sigma^2.\label{x9021}
\end{eqnarray}
If $m\neq \frac{1}{2}$, then the matrices $\sigma^3-2m I$ and $\sigma^3+2m I$ are invertible:
$$(\sigma^3-2m I)(\sigma^3+2m I)=(1-4m^2)I.$$
Taking $\Psi_1$ from (\ref{x9011}) and substituting into (\ref{x9021}), we get a homogeneous system of four linear equations for the four variables $K$, $L$, $M$, and $N$. The matrix of this homogeneous system

\begin{eqnarray*}
\left(
    \begin{array}{cc}
      4m^2(-2m-1)+(a_1^2+a_2^2)(-2m+1) & -4ia_1 m  \\
      4ia_1 m & 4m^2(-2m+1)+(a_1^2+a_2^2)(-2m-1)  \\
      2a_1a_2(-2m+1) & -4ia_2 m  \\
      -4ia_2m & 2a_1a_2(2m+1)  \\
    \end{array}\right.
  \\
    \left.\begin{array}{cc}
       2a_1a_2(-2m+1) & 4ia_2 m \\
     4ia_2 m & 2a_1a_2(2m+1) \\
       4m^2(-2m-1)+(a_1^2+a_2^2)(-2m+1) & 4ia_1m \\
     -4ia_1m & 4m^2(-2m+1)+(a_1^2+a_2^2)(-2m-1) \\
    \end{array}
  \right)
\end{eqnarray*}
has nonzero determinant
$$(1+2m)^2(1-2m)^2((a_1-a_2)^2+4m^2)((a_1+a_2)^2+4m^2)>0,$$
thus $K=L=M=N=0$. We conclude that $\Psi_2=0$, $\Psi_1=0$, and $\Psi=0$, i.e. we get a contradiction. We have no solutions of this type.

Let us consider the case of $m=\frac{1}{2}$. Substituting
$$\Psi_1=\left(
  \begin{array}{cc}
    G & B \\
    C &D \\
  \end{array}\right),\qquad \Psi_2=\left(
  \begin{array}{cc}
    K & L \\
    M & N \\
  \end{array}
\right)
$$
into (\ref{x9011}) and (\ref{x9021}), we get the system
\begin{eqnarray*}
\left(
  \begin{array}{cc}
    0 & -2B \\
    0 & -2D \\
  \end{array}\right)&=&\left(
  \begin{array}{cc}
    M & -N \\
    K & -L \\
  \end{array}\right)+a_1 \left(
  \begin{array}{cc}
    -Ni & -Mi \\
    Li & Ki \\
  \end{array}\right)+a_2 \left(
  \begin{array}{cc}
    L i & -K i \\
    -N i & M i\\
  \end{array}\right),\\
\left(
  \begin{array}{cc}
    2K & 0 \\
    2M & 0 \\
  \end{array}\right)&=&\left(
  \begin{array}{cc}
    C & -D \\
    G & -B \\
  \end{array}\right)+a_1 \left(
  \begin{array}{cc}
    -Di & -Ci \\
    Bi & Gi \\
  \end{array}\right)+a_2 \left(
  \begin{array}{cc}
B i & -G i \\
    -D i &C i \\
  \end{array}\right).
\end{eqnarray*}
Eliminating $B$, $D$, $K$, and $M$, we get a homogeneous system of four linear equations for the four variables $G$, $C$, $L$, and $N$ with the matrix
$$\left(
  \begin{array}{cccc}
    1-a_1^2-a_2^2 & 2a_1a_2& 2ia_2 & 2ia_1 \\
    2a_1a_2 & 1-a_1^2-a_2^2 & -2ia_1 & -2ia_2 \\
    2ia_2 & -2ia_1 & 1-a_1^2-a_2^2&  -2a_1a_1 \\
    2ia_1 & -2ia_2 & -2a_1a_2 & 1-a_1^2-a_2^2 \\
  \end{array}
\right).$$
The determinant of this matrix is nonzero
$$(1+(a_1-a_2)^2)^2(1+(a_1+a_2)^2)^2>0,$$
thus $G=C=L=N=0$ and then $M=K=D=B=0$. We get a contradiction, we have no solutions in this case.

\textbf{Cases 4, 5, and 6.} Let us consider the fourth, the fifth, and the sixth cases (\ref{cases}) together supposing $a_1>0$, $a_2\geq 0$, and $a_3\geq 0$.

We have
\begin{eqnarray}
A^0=A_0=a_1 \tau^1=\frac{a_1}{2i}\sigma^3,\qquad A^1=-A_1=a_2 \tau^2=\frac{a_2}{2i}\sigma^1,\qquad
 A^2=-A_2=a_3 \tau^3=\frac{a_3}{2i}\sigma^2,\qquad A^3=A_3=0.\label{Ax}
\end{eqnarray}
Substituting these expressions into (\ref{sec}), we get
\begin{eqnarray*}
\left(\begin{array}{cc} I & 0 \\ 0 & -I \\ \end{array} \right) \left(\begin{array}{c} \Psi_1 \\ \Psi_2 \\ \end{array} \right)\frac{a_1\sigma^3}{2} - \left(\begin{array}{cc} 0 & \sigma^1 \\ -\sigma^1 & 0 \\ \end{array} \right) \left(\begin{array}{c} \Psi_1 \\ \Psi_2 \\ \end{array} \right)\frac{a_2\sigma^1}{2}
- \left(\begin{array}{cc} 0 & \sigma^2 \\ -\sigma^2 & 0 \\ \end{array} \right) \left(\begin{array}{c} \Psi_1 \\ \Psi_2 \\ \end{array} \right)\frac{a_3\sigma^2}{2}=m\left(\begin{array}{c} \Psi_1 \\ \Psi_2 \\ \end{array} \right).
\end{eqnarray*}
We obtain the following two equations
\begin{eqnarray}
a_1 \Psi_1 \sigma^3-a_2 \sigma^1\Psi_2 \sigma^1-a_3 \sigma^2 \Psi_2 \sigma^2&=&2m\Psi_1,\label{x1}\\
-a_1 \Psi_2 \sigma^3+a_2 \sigma^1 \Psi_1 \sigma^1+a_3 \sigma^2 \Psi_1 \sigma^2&=&2m \Psi_2.\label{x2}
\end{eqnarray}
In Case 4, we have $a_2=a_3=0$ and get the system
$$a_1 \Psi_1 \sigma^3=2m\Psi_1,\qquad -a_1 \Psi_2 \sigma^3=2m \Psi_2.$$
Thus $$\Psi_1(a_1 \sigma^3-2m I_2)=0,\qquad \Psi_2(a_1 \sigma^3+2m I_2)=0.$$
If $a_1\neq 2m$, then $\Psi_1=\Psi_2=0$, i.e. $\Psi=0$ and we obtain a contradiction. Let us consider the case $a_1=2m>0$. For
$$\Psi=\left(\begin{array}{c} \Psi_1 \\ \hline \Psi_2 \\ \end{array} \right)=\left(
  \begin{array}{cc}
    G & B \\
    C & D \\ \hline
    K & L \\
    M & N \\
  \end{array}
\right),$$
we obtain
$$
\left(\begin{array}{cc}
    G & B \\
    C & D \\
  \end{array}
\right) \left(\begin{array}{cc}
    0 & 0 \\
    0 & -4m \\
  \end{array}
\right)=0,\qquad \left(\begin{array}{cc}
    K & L \\
    M & N \\
  \end{array}
\right) \left(\begin{array}{cc}
    4m & 0 \\
    0 & 0 \\
  \end{array}
\right)=0,
$$
i.e. $B=D=K=M=0$. In this case, we have $[A_\mu, [A^\mu, A^\nu]]=0$.
We calculate the current
\begin{eqnarray*}
&&J^0=\frac{i}{2}\left(\begin{array}{cc}|G|^2+|C|^2-|L|^2-|N|^2 & 0 \\0 & |L|^2+|N|^2-|G|^2-|C|^2 \\ \end{array}\right),\qquad
J^1=i\left(\begin{array}{cc}  0 &  \bar{C}L+\bar{G}N \\ \bar{N}G+\bar{L}C & 0 \\ \end{array}\right),\\
&&J^2=i\left(\begin{array}{cc}  0 & i(\bar{C}L-\bar{G}N) \\  i(\bar{N}G-\bar{L}C) & 0 \\ \end{array}\right),\qquad
J^3=i\left(\begin{array}{cc}  0 & \bar{G}L-\bar{C}N \\  \bar{L}G-\bar{N}C & 0 \\ \end{array}\right),
\end{eqnarray*}
and obtain
\begin{eqnarray*}
|G|^2+|C|^2=|L|^2+|N|^2,\qquad  \bar{G}L-\bar{C}N=0,\qquad \bar{C}L=\bar{G}N=0,
\end{eqnarray*}
i.e. $G=C=L=N=0$. We get $\Psi=0$, i.e. a contradiction.  We have no solutions in this case.

Let us consider Case 5 (we have $a_1>0$, $a_2>0$, $a_3=0$). Multiplying (\ref{x1}) on the right and on the left by $\sigma^1$ and using $(\sigma^1)^2=I$, we get
\begin{eqnarray}
\Psi_2=\frac{a_1}{a_2} \sigma^1\Psi_1 \sigma^3\sigma^1-\frac{2m}{a_2}\sigma^1\Psi_1\sigma^1.\label{x3}
\end{eqnarray}
Substituting this expression into (\ref{x2}), using $(\sigma^1)^2=I$ and $\sigma^1 \sigma^3=-\sigma^3 \sigma^1$, we get
$$\Psi_1((a_1^2+a_2^2+4m^2)I-4a_1 m \sigma^3)=0.$$
The determinant of the matrix $(a_1^2+a_2^2+4m^2)I-4a_1 m \sigma^3$ is nonzero
$$((a_1-2m)^2+a_2^2)((a_1+2m)^2+a_2^2)>0,$$
thus $\Psi_1=0$. Then from (\ref{x3}), we get $\Psi_2=0$, and $\Psi=0$, i.e. we obtain a contradiction.

Now let us consider Case 6 (with $a_1, a_2, a_3>0$). From (\ref{x1}) and (\ref{x2}), we get
\begin{eqnarray}
&&\Psi_1 \left(\begin{array}{cc} a_1-2m & 0 \\ 0 & -2m-a_1 \\ \end{array} \right)=a_2 \sigma^1 \Psi_2 \sigma^1+a_3 \sigma^2 \Psi_2 \sigma^2,\label{x4}\\
&&\Psi_2  \left(\begin{array}{cc} -a_1-2m & 0 \\ 0 & -2m+a_1 \\ \end{array} \right)=-a_2 \sigma^1 \Psi_1 \sigma^1-a_3 \sigma^2 \Psi_1 \sigma^2.\label{x5}
\end{eqnarray}
In the case $a_1\neq 2m$, we can express $\Psi_1$ through $\Psi_2$ from (\ref{x4}) and substitute into (\ref{x5}). Denoting
\begin{eqnarray*}
\Psi_2=\left(\begin{array}{cc} K & L \\ M & N \\ \end{array} \right),\qquad K, L, M, N\in\C,
\end{eqnarray*}
we get the following equations for the variables $K$, $L$, $M$, and $N$:
\begin{eqnarray*}
K(-2m+a_1)((2m+a_1)^2+(a_2+a_3)^2)=0,\qquad
L(2m+a_1)((-2m+a_1)^2+(a_2-a_3)^2)=0,\\
M(-2m+a_1)((2m+a_1)^2+(a_2-a_3)^2)=0,\qquad
N(2m+a_1)((-2m+a_1)^2+(a_2+a_3)^2)=0,
\end{eqnarray*}
i.e. $K=L=M=N=0$. If $\Psi_2=0$, then $\Psi_1=0$ and $\Psi=0$, and we obtain a contradiction.

Let us consider the case $a_1=2m >0$. We get from (\ref{x4}) and (\ref{x5}) the equations
\begin{eqnarray*}
-4m \left(\begin{array}{cc} 0 & B \\ 0 & D \\ \end{array} \right)&=&a_2 \left(\begin{array}{cc} N & M \\ L & K \\ \end{array} \right) +a_3 \left(\begin{array}{cc} N & -M \\ -L & K \\ \end{array} \right),\\
-4m \left(\begin{array}{cc} K & 0 \\ M & 0 \\ \end{array} \right)&=&-a_2 \left(\begin{array}{cc} D & C \\ B & G \\ \end{array} \right) -a_3 \left(\begin{array}{cc} D & -C \\ -B & G \\ \end{array} \right).
\end{eqnarray*}
Solving the eight equations, we obtain $a_2=a_3$ and $G=D=K=N=B=M=0$. Substituting (\ref{Ax}) into (\ref{fir}), we get the following expressions on the left side of the equations
\begin{eqnarray}
&&J^0=[A_1,[A^1, A^0]]+[A_2, [A^2, A^0]]=-4((A^1)^2+(A^2)^2) A^0\nonumber=\frac{-ia_1(a_2^2+a_3^2)}{2} \sigma^3,\nonumber\\
&&J^1=[A_0,[A^0, A^1]]+[A_2, [A^2, A^1]]=4 ((A^0)^2-(A^2)^2) A^1=\frac{ia_2(a_1^2-a_3^2)}{2}  \sigma^1,\label{J1}\\
&&J^2=[A_0,[A^0, A^2]]+[A_1, [A^1, A^2]]=4 ((A^0)^2 -(A^1)^2) A^2\nonumber=\frac{ia_3(a_1^2-a_2^2)}{2} \sigma^2,\qquad\qquad J^3=0.\nonumber
\end{eqnarray}
On the right side of (\ref{fir}), we obtain
\begin{eqnarray}
J^0=\frac{i}{2}\left(\begin{array}{cc} |C|^2-|L|^2 & 0 \\ 0 & |L|^2-|C|^2 \\ \end{array} \right),\qquad J^1=i\left(\begin{array}{cc} 0 & \bar{C}L \\ \bar{L}C & 0 \\ \end{array} \right),\qquad
J^2=\left(\begin{array}{cc} 0 & -\bar{C}L \\ \bar{L}C & 0\\ \end{array} \right),\qquad J^3=0.\label{J2}
\end{eqnarray}
Equating (\ref{J1}) and (\ref{J2}), we get $a_1=a_2$, $C=0$, and the solutions (\ref{casei}).

\textbf{Cases 7, 8, and 9.} Let us consider the seventh, the eighth, and the ninth cases (\ref{cases}) together supposing $a_1>0$, $a_2\geq 0$, and $a_3\geq 0$.

We have
\begin{eqnarray}
A^0=A_0=0,\qquad A^1=-A_1=a_1 \tau^1=\frac{a_1}{2i}\sigma^3,\qquad
A^2=-A_2=a_2 \tau^2=\frac{a_2}{2i}\sigma^1,\qquad A^3=-A_3=a_3 \tau^3=\frac{a_3}{2i}\sigma^2.\label{Ax2}
\end{eqnarray}
Substituting these expressions into (\ref{sec}), we get
\begin{eqnarray*}
-\left(\begin{array}{cc} 0 & \sigma^1 \\ -\sigma^1 & 0 \\ \end{array} \right) \left(\begin{array}{c} \Psi_1 \\ \Psi_2 \\ \end{array} \right)\frac{a_1\sigma^3}{2}- \left(\begin{array}{cc} 0 & \sigma^2 \\ -\sigma^2 & 0 \\ \end{array} \right) \left(\begin{array}{c} \Psi_1 \\ \Psi_2 \\ \end{array} \right)\frac{a_2\sigma^1}{2}
- \left(\begin{array}{cc} 0 & \sigma^3 \\ -\sigma^3 & 0 \\ \end{array} \right) \left(\begin{array}{c} \Psi_1 \\ \Psi_2 \\ \end{array} \right)\frac{a_3\sigma^2}{2}=m\left(\begin{array}{c} \Psi_1 \\ \Psi_2 \\ \end{array} \right).
\end{eqnarray*}
We obtain the following two equations
\begin{eqnarray}
a_1 \sigma^1\Psi_2 \sigma^3+a_2 \sigma^2 \Psi_2 \sigma^1 +a_3 \sigma^3 \Psi_2 \sigma^2&=&-2m\Psi_1,\label{x12}\\
a_1 \sigma^1 \Psi_1 \sigma^3+a_2\sigma^2\Psi_1\sigma^1+a_3 \sigma^3 \Psi_1 \sigma^2&=&2m \Psi_2.\label{x22}
\end{eqnarray}
Let us consider Case 7. We have $a_1>0$, $a_2=a_3=0$, and get the system
$$a_1 \sigma^1\Psi_2 \sigma^3=-2m\Psi_1,\qquad a_1 \sigma^1 \Psi_1 \sigma^3=2m \Psi_2.$$
From the first equation, we get $\Psi_2=-\frac{2m}{a_1}\sigma^1\Psi_1\sigma^3$. Substituting this expression into the second equation, we get
$(a_1^2+4m^2)\Psi_1=0$, i.e. $\Psi_1=0$. Thus $\Psi_2=0$, $\Psi=0$, and we get a contradiction.

Let us consider Cases 8 and 9 (we have $a_1>0$, $a_2>0$, $a_3\geq 0$). If $m\neq 0$, then from the equation (\ref{x12}), we get
$\Psi_1=-\frac{1}{2m}(a_1 \sigma^1\Psi_2 \sigma^3+a_2 \sigma^2 \Psi_2 \sigma^1+a_3 \sigma^3\Psi_2 \sigma^3)$. Substituting this expression into (\ref{x22}), we get
\begin{eqnarray*}
(a_1^2+a_2^2+a_3^2+4m^2)\Psi_2+2a_1 a_2 \sigma^1 \sigma^2\Psi_2 \sigma^1 \sigma^3+2a_1 a_3 \sigma^1 \sigma^3\Psi_2 \sigma^2 \sigma^3
+2a_2 a_3 \sigma^2 \sigma^3\Psi_2 \sigma^2 \sigma^1=0.
\end{eqnarray*}
For $\Psi_2=\left(\begin{array}{cc} K & L \\ M & N \\ \end{array} \right)$, we obtain the equation
\begin{eqnarray*}
(a_1^2+a_2^2+a_3^2+4m^2)\left(\begin{array}{cc} K & L \\ M & N \\ \end{array} \right) +2 a_1 a_2 \left(\begin{array}{cc} iL & -iK \\ -iN & iM \\ \end{array} \right)
+2a_1 a_3 \left(\begin{array}{cc} -iN & -iM \\ iL & iK \\ \end{array} \right)+2a_2 a_3 \left(\begin{array}{cc} M & -N \\ K & -L \\ \end{array} \right)=\left(\begin{array}{cc} 0 & -0 \\ 0 & 0 \\ \end{array} \right),
\end{eqnarray*}
which can be rewritten in the form of a homogeneous system of four linear equations with the matrix
$$
\left(
  \begin{array}{cccc}
    a_1^2+a_2^2+a_3^2+4m^2 & 2ia_1 a_2 & 2a_2 a_3 & -2ia_1 a_3 \\
    -2ia_1 a_2 & a_1^2+a_2^2+a_3^2+4m^2 & -2ia_1 a_3 & -2a_2 a_3 \\
    2a_2 a_3 & 2ia_1 a_3 & a_1^2+a_2^2+a_3^2+4m^2 & -2ia_1 a_2 \\
    2ia_1 a_3 & -2a_2 a_3 & 2ia_1 a_2 & a_1^2+a_2^2+a_3^2+4m^2 \\
  \end{array}
\right).
$$
The determinant of this matrix is nonzero
\begin{eqnarray*}
((a_1+a_2+a_3)^2+4m^2)((a_1-a_2+a_3)^2+4m^2)((a_1+a_2-a_3)^2+4m^2)
((-a_1+a_2+a_3)^2+4m^2)>0,
\end{eqnarray*}
thus $\Psi_2=0$, $\Psi_1=0$ and $\Psi=0$, and we obtain a contradiction. We have no solutions of this type.

If $m=0$, then the system (\ref{x12}), (\ref{x22}) takes the form
\begin{eqnarray}
a_1 \left(\begin{array}{cc} M & -N \\ K & -L \\ \end{array} \right)+a_2 \left(\begin{array}{cc} -iN & -iM \\ iL & iK \\ \end{array} \right)+a_3\left(\begin{array}{cc} iL & -iK \\ -iN & iM \\ \end{array} \right)&=&0,\label{t45}\\
a_1 \left(\begin{array}{cc} C & -D \\ G & -B \\ \end{array} \right)+a_2 \left(\begin{array}{cc} -iD & -iC \\ iB & iG \\ \end{array} \right)+a_3\left(\begin{array}{cc} iB & -iG\\ -iD & iC \\ \end{array} \right)&=&0.\nonumber
\end{eqnarray}
In Case 8 ($a_3=0$), we get
$$\Psi=\left(
  \begin{array}{cc}
    G & iG \\
    iD & D \\
    K & iK \\
    iN & N \\
  \end{array}
\right),\qquad a_1=a_2.$$
Equating the left side of (\ref{fir})
\begin{eqnarray*}
&&J^0=0,\qquad J^1=[A_2,[A^2, A^1]]=-4(A^2)^2 A^1=a_2^2 a_1\tau^1=\frac{a_2^2 a_1}{2i}\sigma^3,\\
&&J^2=[A_1,[A^1, A^2]]=-4(A^1)^2 A^2=a_1^2 a_2 \tau^2=\frac{a_1^2 a_2}{2i}\sigma^1,\qquad J^3=0,
\end{eqnarray*}
and the right side of (\ref{fir})
\begin{eqnarray}
&&J^0=\left(\begin{array}{cc} 0 & -|G|^2+|D|^2-|K|^2+|N|^2 \\ |G|^2-|D|^2+|K|^2-|N|^2 & 0 \\ \end{array} \right),\nonumber\\
&&J^1=i\left(\begin{array}{cc} -i\bar{N}G+i\bar{K}D-i\bar{D}K+i\bar{G}N & \bar{N}G+\bar{K}D+\bar{D}K+\bar{G}N \\ \bar{N}G+\bar{K}D+\bar{D}K+\bar{G}N &  i\bar{N}G-i\bar{K}D+i\bar{D}K-i\bar{G}N \\ \end{array} \right),\nonumber\\
&&J^2=i\left(\begin{array}{cc} \bar{N}G+\bar{K}D+\bar{D}K+\bar{G}N & i\bar{N}G-i\bar{K}D+i\bar{D}K-i\bar{G}N \\ i\bar{N}G-i\bar{K}D+i\bar{D}K-i\bar{G}N & -\bar{N}G-\bar{K}D-\bar{D}K-\bar{G}N\\ \end{array} \right),\nonumber\\
&&J^3=i\left(\begin{array}{cc} 0 & i(\bar{K}G+\bar{N}D+\bar{G}K+\bar{D}N) \\ -i(\bar{K}G+\bar{N}D+\bar{G}K+\bar{D}N) & 0\\ \end{array} \right),\nonumber
\end{eqnarray}
we get
\begin{eqnarray*}
|G|^2+|K|^2=|D|^2+|N|^2,\qquad \Re(\bar{K}G+\bar{N}D)=0,\qquad
\Re(\bar{N}G+\bar{K}D)=0,\qquad \Im(\bar{N}G-\bar{K}D)=a_1^2 a_2=a_2^2 a_1.
\end{eqnarray*}
We obtain $a_1=a_2$ and the solutions (\ref{caseii}).

In Case 9 ($a_3\neq 0$), we get from (\ref{t45})
\begin{eqnarray*}
L=\frac{i}{a_1}(a_2 K+a_3 M),\qquad N=\frac{-i}{a_1}(a_2 M+a_3 K),\qquad
K=\frac{a_1^2-a_2^2-a_3^2}{2a_2a_3}M,\qquad (a_1^2-(a_2+a_3)^2)(a_1^2-(a_2-a_3)^2)M=0.
\end{eqnarray*}
If $M\neq 0$, then $a_1=a_2+a_3$, because we can always take the hyperbolic singular values in decreasing order $a_1>a_2>a_3>0$. If $M=0$, then $K=0$, and $L=N=0$, thus $\Psi_2=0$. We get the same for the variables $G, B, C, D$ from (\ref{t45}).
We obtain
$$\Psi=\left(
  \begin{array}{cc}
    G & \frac{i}{a_2+a_3}(a_2 G+ a_3 C) \\
    C & \frac{-i}{a_2+a_3}(a_2 C+ a_3 G) \\
    K & \frac{i}{a_2+a_3}(a_2 K+ a_3 M) \\
    M & \frac{-i}{a_2+a_3}(a_2 M+ a_3 K) \\
  \end{array}
\right).
$$
On the left side of (\ref{fir}), we get
\begin{eqnarray}
&&J^0=0,\qquad J^1=[A_2,[A^2, A^1]+[A_3,[A^3,A^1]=(a_2^2+a_3^2)a_1 \tau^1,\label{t56}\\
&&J^2=[A_1,[A^1, A^2]+[A_3,[A^3,A^2]=(a_1^2+a_3^2)a_2 \tau^2,\qquad J^3=[A_1,[A^1, A^3]+[A_2,[A^2,A^3]=(a_1^2+a_2^2)a_3 \tau^3.\nonumber
\end{eqnarray}
Calculating $J^0$ on the right side of (\ref{fir}), we get
\begin{eqnarray*}
&&(a_2+a_3)^2(|G|^2+|C|^2+|K|^2+|M|^2)=(|a_2 G+a_3 C|^2+|a_2 C+a_3 G|^2+|a_2 K+a_3 M|^2+|a_2 M+a_3 K|^2),\\
&&a_2(|G|^2-|C|^2+|K|^2-|M|^2)+a_3(\bar{G}C-\bar{C}G+\bar{K}M-\bar{M}K)=0.
\end{eqnarray*}
Simplifying the first equation, we get $|G-C|^2+|K-M|^2=0$, i.e. $G=C$ and $K=M$. We conclude that
$$\Psi=\left(
  \begin{array}{cc}
    G & iG \\
    G & -iG \\
    K & iK \\
    M & -iK \\
  \end{array}
\right).
$$
Equating
\begin{eqnarray*}
&&J^1=i\left(\begin{array}{cc} 2(\bar{K}G+\bar{G}K) & 0 \\ 0 &  -2(\bar{K}G+\bar{G}K) \\ \end{array} \right),\qquad
J^2=i\left(\begin{array}{cc} 0 & -2(\bar{K}G+\bar{G}K) \\ -2(\bar{K}G+\bar{G}K) & 0 \\ \end{array} \right),\\
&&J^3=i\left(\begin{array}{cc} 0 & 2i(\bar{K}G+\bar{G}K) \\  -2i(\bar{K}G+\bar{G}K) & 0\\ \end{array} \right),
\end{eqnarray*}
and (\ref{t56}), we get
$$4(\bar{K}G+\bar{G}K)=-a_1(a_2^2+a_3^3)=a_2(a_1^2+a_3^2)=a_3(a_1^2+a_2^2),$$
i.e. a contradiction. We have no solutions in this case.

The theorem is proved.

\section{The proof of Theorem \ref{lemmazerocurrent}}\label{secB}

\textbf{Case 1.} In the first case (\ref{psi}), we calculate $J^\mu$, $\mu=0, 1, 2, 3$:
\begin{eqnarray*}
&&J^0=i\Psi^\T \Psi-\frac{1}{2}\tr (i\Psi^\T \Psi)=\left(
         \begin{array}{cc}
           i & 0 \\
           0 & -i \\
         \end{array}
       \right)=i \sigma^3=-2\tau^1,\\
&&J^\nu=i\Psi^\T \gamma^0 \gamma^\nu \Psi-\frac{1}{2}\tr(i\Psi^\T \gamma^0 \gamma^\nu \Psi)=0,\qquad \mu=1, 2,\\
&&J^3=i\Psi^\T \gamma^0\gamma^3\Psi-\frac{1}{2}\tr (i\Psi^\T \gamma^0\gamma^3\Psi)=\left(
         \begin{array}{cc}
           i & 0 \\
           0 & -i \\
         \end{array}
       \right)=i \sigma^3=-2\tau^1.
\end{eqnarray*}
We get the nonzero matrix
$$J=(J^\mu_{\,\, a})=\left(
      \begin{array}{ccc}
        -2 & 0 & 0 \\ \hline
        0 & 0 & 0 \\
        0 & 0 & 0 \\
        -2 & 0 & 0 \\
      \end{array}
    \right)$$
with $d_J=1$, $x_J=y_J=0$.

\textbf{Case 2.} In the second case (\ref{psi}), we calculate $J^\mu$, $\mu=0, 1, 2, 3$. In the case of signs $+$ and $+$, we get
\begin{eqnarray*}
&&J^0=0,\qquad J^1=-4\tau^2,\qquad J^2=-4\tau^3,\qquad J^3=-4\tau^1
\end{eqnarray*}
i.e. the nonzero matrix
$$J=(J^\mu_{\,\, a})=\left(
      \begin{array}{ccc}
        0 & 0 & 0 \\ \hline
        0 & -4 & 0 \\
        0 & 0 & -4 \\
        -4 & 0 & 0 \\
      \end{array}
    \right)$$
with $d_J=0$, $x_J=0$, $y_J=3$.

In the cases of signs $+$ and $-$; $-$ and $+$, we get
$J^0=J^1=J^2=J^3=0,$
i.e. the zero matrix $J=0$.

In the case of signs $-$ and $-$, we get
\begin{eqnarray*}
&&J^0=0,\qquad J^1=4\tau^2,\qquad J^2=4\tau^3,\qquad J^3=4\tau^1
\end{eqnarray*}
i.e. the nonzero matrix
$$J=(J^\mu_{\,\, a})=\left(
      \begin{array}{ccc}
        0 & 0 & 0 \\ \hline
        0 & 4 & 0 \\
        0 & 0 & 4 \\
        4 & 0 & 0 \\
      \end{array}
    \right)$$
with $d_J=0$, $x_J=0$, $y_J=3$.

\textbf{Case 3.} In the third case (\ref{psi}), we calculate $J^\mu$, $\mu=0, 1, 2, 3$:
\begin{eqnarray*}
&&J^0=(2-\psi_1^2)\tau^1,\qquad J^1=2\tau^1,\qquad J^2=0,\qquad J^3=-2\psi_1 \tau^2.
\end{eqnarray*}
We get the nonzero matrix
$$J=(J^\mu_{\,\, a})=\left(
      \begin{array}{ccc}
        2-\psi_1^2 & 0 & 0 \\ \hline
        2 & 0 & 0 \\
        0 & 0 & 0 \\
        0 & -2\psi_1 & 0 \\
      \end{array}
    \right).$$

\textbf{Case 4.} In the forth case (\ref{psi}), we calculate $J^\mu$, $\mu=0, 1, 2, 3$:
\begin{eqnarray*}
&&J^0=(2-\psi_1^2)\tau^1,\qquad J^1=2\tau^1,\qquad J^2=0,\qquad J^3=-2\psi_1 \tau^2.
\end{eqnarray*}
We get the nonzero matrix
$$J=(J^\mu_{\,\, a})=\left(
      \begin{array}{ccc}
        2-\psi_1^2 & 0 & 0 \\ \hline
        2 & 0 & 0 \\
        0 & 0 & 0 \\
        0 & -2\psi_1 & 0 \\
      \end{array}
    \right).$$

\textbf{Cases 5 and 6.} Let us consider the fifth and the sixth cases (\ref{psi}) together supposing $\psi_1>0$ and $\psi_2\in\R$. We get
\begin{eqnarray*}
J^0=i\left(
         \begin{array}{cc}
           \frac{\psi_1^2-\psi_2^2}{2} & 0 \\
         0 & \frac{\psi_2^2-\psi_1^1}{2} \\
         \end{array}
       \right)=i\frac{\psi_1^2-\psi_2^2}{2}\sigma^3=(\psi_2^2-\psi_1^2)\tau^1,\qquad
       J^\mu=0,\qquad \mu=1, 2, 3.
\end{eqnarray*}
We obtain the matrix
$$J=\left(
      \begin{array}{ccc}
        \psi_2^2-\psi_1^2 & 0 & 0 \\ \hline
        0 & 0 & 0 \\
        0 & 0 & 0 \\
        0 & 0 & 0 \\
      \end{array}
    \right).$$
This matrix is the zero matrix in the cases $\psi_1=\psi_2>0$ and $\psi_1=-\psi_2>0$.

\textbf{Cases 7 and 8.} Let us consider the seventh and the eighth cases (\ref{psi}) together with $\psi_1>0$, $\psi_2\in\R$. We get
\begin{eqnarray*}
J^0=i\left(
         \begin{array}{cc}
           \frac{\psi_1^2-\psi_2^2}{2} & 0 \\
         0 & \frac{\psi_2^2-\psi_1^1}{2} \\
         \end{array}
       \right)=i\frac{\psi_1^2-\psi_2^2}{2}\sigma^3=(\psi_2^2-\psi_1^2)\tau^1,\qquad
       J^\mu=0,\qquad \mu=1, 2, 3.
\end{eqnarray*}
We obtain the matrix
$$J=\left(
      \begin{array}{ccc}
        \psi_2^2-\psi_1^2 & 0 & 0 \\ \hline
        0 & 0 & 0 \\
        0 & 0 & 0 \\
        0 & 0 & 0 \\
      \end{array}
    \right).$$
This matrix is the zero matrix in the cases $\psi_1=\psi_2>0$ and $\psi_1=-\psi_2>0$.

\textbf{Case 9.} In the ninth case (\ref{psi}), we calculate $J^\mu$, $\mu=0, 1, 2, 3$:
\begin{eqnarray*}
&&J^0=(\psi_2^2-\psi_1^2)\tau^1,\qquad J^1=J^2=0,\qquad J^3=-2\psi_1\psi_2\tau^2.
\end{eqnarray*}
We get the nonzero matrix
$$J=(J^\mu_{\,\, a})=\left(
      \begin{array}{ccc}
        \psi_2^2-\psi_1^2 & 0 & 0 \\ \hline
        0 & 0 & 0 \\
        0 & 0 &0 \\
        0 & -2\psi_1\psi_2 & 0 \\
      \end{array}
    \right).$$
The theorem is proved.


\end{document}